\newcommand{\tr}{\mathop{\rm tr}\nolimits}
\def\ST{{\mathsf T}}
\def\SP{{\mathsf P}}
\def\CT{{\mathcal T}}
\newtheorem{theorem}{\bfseries Theorem}
\newtheorem{proposition}{\bfseries Proposition}
\newtheorem{lemma}{\bfseries Lemma}
\newtheorem{ex}{\sc Example}
\begin{document}
\begin{center}
{\bf\large\sc 
A relation for the Jones--Wenzl projector and \\ 
tensor space representations of the Temperley--Lieb algebra  
} 

\vspace*{4mm}
{ Andrei Bytsko   }
\vspace*{3mm}

\end{center}
\vspace*{2mm}

\begin{abstract}
A relation for the Jones-Wenzl projector is proven. It  has the following
consequence for representations of the Temperley-Lieb algebra on tensor product spaces:
if such a representation is built from a Hermitian $n\,{\times}\,n$ matrix $T$ of 
rank $r$ such that $T^2=Q T$, then either  $n^2 = Q^2 r$ and $Q^2 =1,2,3$
or $n^2 \geq 4 r$. For the latter class of representations, new examples are found.
This includes explicit examples for $r=2,3,4$ and any  $n \geq r$ (with one exception) and
a solution for $n=r+1$ with arbitrary~$r$.
\end{abstract}

\section{Introduction}

The Temperley--Lieb  algebra $TL_N(Q)$, where 
$Q >0$ and $N \geq 2$, is a unital associative algebra over $\mathbb C$
with generators $\ST_1,\,{\ldots}\,,\ST_{N-1}$ and relations 
\begin{align}
\label{TL1q}
{}& \ST_k^2 = Q \, \ST_k ,
 &&  \!\!\!\!\!\!\!\!\!\!\! \text{\rm for all}\ \ k  , \\
 \label{TL1}
{}& \ST_k \, \ST_m = 
\ST_m \, \ST_k  , &&
 \!\!\!\!\!\!\!\!\!\!\!  \text{\rm if} \ \ |k-m| \geq 2 \,,  \\
\label{TL2}
{}&  \ST_k \, \ST_{m}  \ST_k = \ST_k , 
 &&  \!\!\!\!\!\!\!\!\!\!\!  \text{\rm if} \ \ |k-m|=1  \,.
\end{align} 

Let $M_n$ be the ring of $n\,{\times}\,n$ complex matrices,
$I_n \in M_n$ denote the identity matrix,  and
$X^*$ stand for the  conjugate transpose of $X \in M_n$.
Consider a matrix $T \in M_{n^2}$ that satisfies the following relations:
\begin{align*} 
{}& (\mathrm{T1})     &&
  T^* = T ,&&  
	 T^2 = Q \, T   , \quad Q>0, && \\  
{}& (\mathrm{T2})     &&   
 T_{12} \, T_{23} \, T_{12} \, = T_{12}  ,&&  
 T_{23} \, T_{12} \, T_{23} \, = T_{23}  , &&    
\end{align*}
where 
$T_{12} \equiv T \,{\otimes}\, I_n$ and 
$T_{23} \equiv I_n \,{\otimes}\, T$, $\otimes$ 
denotes the matrix Kronecker product.

A solution to (T1)--(T2) defines a unitary (w.r.t. the involution $\ST_k^*=\ST_k$) 
representation $\tau$ of  $TL_N(Q)$ on the tensor product space
$\bigl({{\mathbb C}^n}\bigr)^{\otimes N}$  such that 
\begin{align}\label{tau}
 \tau(\ST_k) = T_{k,k+1} \equiv
I_n^{\otimes (k-1)} \otimes T \otimes I_n^{\otimes (N-k-1)} .
\end{align}
Every solution $T$ to (T1)--(T2) can be used to build an R-matrix,
i.e., a solution to the Yang-Baxter equation that
plays a key role in constructing quantum integrable models,
\begin{align}\label{YB}
   R_{12}(\lambda) R_{23}(\lambda+\mu) R_{12}(\mu) = 
   R_{23}(\mu) R_{12}(\lambda+\mu) R_{23}(\lambda).
\end{align}
Namely, $R(\lambda) = \sinh(\lambda + \gamma) I_{n^2} - \sinh (\lambda) \,T$,
where  $ e^{\gamma}+e^{-\gamma}=Q$, satisfies  equation~(\ref{YB}).
The corresponding constant R-matrix, $R = e^{\gamma} I_{n^2} - T$, 
satisfies the constant version of (\ref{YB}) and thus 
yields a braid group representation.
The most known example of this type is the R-matrix related to the 
fundamental representation of $U_q(\mathfrak{sl}_2)$, 
see also \cite{BK} for  solutions to (T1)--(T2) related in a similar way 
to some other quantum groups. 

Three important characteristics of a solution 
to (T1)--(T2) are its ``size'' $n$, the value of $Q$, and the rank  
$r=\mathrm{rank}(T)$. Note that these parameters are not independent.
In particular, the system (T1)--(T2) has no solution if $Q\, r < n$, see~\cite{By1}.

In the $r=1$ case, a solution to (T1)--(T2) exists if and only if $Q=n=1$ or 
$Q \geq n \geq 2$ and a general solution is known (cf.\ Corollary 1 in \cite{By2}).
But in the higher rank case, $r \geq 2$, solutions to (T1)--(T2) have been constructed
so far only in few particular cases and no general sufficient condition of existence 
of a solution to (T1)--(T2) with given values of $n$, $r$, and $Q$ seems to be know at present
(except for $Q=2$, in which case a necessary and sufficient condition on $n$ and $r$ has been 
found recently in \cite{LPW}, see also Section~\ref{QQ2} below).  

In the present article, we will establish a relation for the Jones--Wenzl projector that
will allow us to refine the necessary conditions of existence 
of a solution to (T1)--(T2). We will also construct some new varying $Q$
solutions to (T1)--(T2), where, for given $n$ and $r$, the value of $Q$ depends on a set
of parameters. This includes, in particular, the case $n=r+1$ that generalizes
the well-known $n=2$, $r=1$ solution related to the $U_q(\mathfrak{sl}_2)$ R-matrix.
We will show that one can add and multiply
in a certain sense two solutions to (T1)--(T2) if they have suitable ranks. Using 
these constructions, we will provide examples of solutions to (T1)--(T2) for
$r=2,3,4$ and any $n \geq r$ (with one exception). Finally, we will propose 
a conjectural refinement of a necessary condition of existence 
of a solution to (T1)--(T2).

\section{Jones--Wenzl projector and restrictions on  $Q$ and $n^2/r$}

Recall the definition of the Jones--Wenzl projector \cite{Jo1,We}. Fix $Q>0$.
Let $\rho_N, N=1,2,\ldots$ be the sequence of rational functions
in $Q$ defined inductively by
\begin{equation}\label{rhoind} 
  \rho_{N+1} = \bigl( Q - \rho_N \bigr)^{-1} , \qquad
  \rho_1=1/Q.
\end{equation}
 
Let $\SP_N, N=1,2,\ldots$ be the sequence of elements of 
$TL_N(Q)$  defined inductively by the relation (where $\SP_N$ is regarded 
as an element of $TL_N(Q) \subset TL_{N+1}(Q)$)
\begin{equation}\label{Pind} 
 \SP_{N+1} = \SP_N - \rho_N \, 
 \SP_N  \ST_{N} \, \SP_N , \qquad  \SP_1 = 1.
\end{equation} 
Note that $\SP_N$ is well defined if $\rho_k$  are finite for all $k < N$.

The key property of $\SP_N$ is  that it is an idempotent satisfying the following relations:
\begin{align}
\label{JW2}
{}& \ST_k \, \SP_N = \SP_N \, \ST_k = 0 , 
 \qquad  \text{for}\ \  k= 1,\ldots, N-1 .
\end{align}
Note that $\SP_N$ is invariant under the automorphism
$\phi(\ST_{k}) = \ST_{N-k}$ of $TL_{N}(Q)$.
Indeed, $\phi(\SP_N)$ satisfies the same relations (\ref{JW2}) and hence
$ \phi(\SP_N) =  \phi(\SP_N) \SP_N= \SP_N $.

The Temperley--Lieb algebra admits a normalized Markov trace~\cite{GHJ,Jo1}, i.e.,  
a linear map $\mathrm{Tr} : TL_N(Q) \to {\mathbb C}$ such that 
$\mathrm{Tr}(1)=1$, 
$\mathrm{Tr}(x y) = \mathrm{Tr}(y x)$ for all 
$x, y \in TL_{N}(Q)$,  and 
$\mathrm{Tr} (x \, T_{N}) = Q^{-1} \mathrm{Tr} (x)$ for any  
$x  \in TL_{N}(Q) \subset TL_{N+1}(Q)$. 
The requirement that this trace be positive, that is $\mathrm{Tr} (p) \geq 0$ 
if $p$ is an idempotent and, in particular, 
$\mathrm{Tr} (\SP_N) \geq 0$  for all $N$ such that $\SP_N$ is well defined, 
restricts the possible values of 
$Q$ to the range $Q \in J_\infty  \cup [2,\infty)$~\cite{Jo1}, where
\begin{equation}\label{Qnr2} 
    J_\infty \equiv \bigl\{ 2 \cos\bigl( \frac{\pi}{k+2}\bigl), \ 
   k=1,2,\ldots  \bigr\}.
\end{equation}  

Let a matrix $T \in M_{n^2}$ of rank $r$ be a solution to (T1)--(T2) and
$\tau$ be the corresponding tensor space representation of  $TL_N(Q)$ given by~(\ref{tau}).
Then $P_N = \tau(\SP_N)$ is an orthogonal projection, so its matrix trace must 
be non-negative for all $N$ such that $\SP_N$ is well defined. However, this
requirement does not imply that $Q \in J_\infty  \cup [2,\infty)$ as it was for the 
Markov trace. Indeed,  $\mathrm{Tr} (\SP_N)$ is a function in $Q$,
hence the above restriction on the range of~$Q$. But its counterpart  
$n^{-N} \tr (P_N)$ is a polynomial in $r/n^2$ that does not depend 
 on $Q$ at all (cf.\ eq.~(\ref{mtrnex})). Although the condition $\tr (P_N) \geq 0$
imposes some restrictions on $Q$ if $4r > n^2$ (cf.\ Theorem~4 in~\cite{By1}), 
it yields no restriction if $4r \leq n^2$ .

We will derive some restriction on the possible range of $Q$ in 
(T1)--(T2) from an observation that, in the representation (\ref{tau}), 
a certain matrix related to the Jones--Wenzl projector is positive semidefinite. 
To this end, we first need to establish a relation for the Jones--Wenzl projector. 

Let us regard $TL_N(Q)$  as a subalgebra of $TL_{N+2}(Q)$ and  denote  
\begin{align}\label{T'}
{}& \ST'_k = \ST_{k+1} , \qquad
  \ST''_k = \ST_{k+2}  \qquad  \text{for}\ \  k= 1,\ldots, N-1.
\end{align}
Let $\SP'_N$ and $\SP''_N$ stand for $\SP_N$, where 
each  $\ST_k$ is replaced, respectively,  by $\ST'_k$ and $\ST''_k$. 
For instance, $\SP_2 = 1 - Q^{-1} \ST_1$,  
$\SP'_2 = 1 - Q^{-1} \ST_2$, $\SP''_2 = 1 - Q^{-1} \ST_3$.

Let us commence with an observation that 
\begin{align}\label{PP2a}
(\SP_{2} - \SP_{2}' )^2 = \frac{\rho_1}{\rho_2} (1-\SP_3) .
\end{align}
Consider this relation in the representation (\ref{tau}). Let $Q \neq 1$.
Then $(P_{2} - P_{2}' )^2$ and $(I_{n^3}-P_3)=(I_{n^3}-P_3)^2$ are
nonzero positive semidefinite matrices (note that 
$\tr (I_{n^3}-P_3) = 2 nr$).  Which implies that
$\rho_1/\rho_2 = (Q^2 -1)/Q^2 > 0$ and hence $Q > 1$.

Although, for $N \geq 3$, $(\SP_{N} - \SP_{N}' )^2$ is not a multiple 
of a projection, we observe the following (proofs of all propositions are given
in the Appendix):
\begin{proposition}\label{HHrho}  
If $\rho_k$  are finite  for all $k < N$, then the following relation holds:
\begin{equation}\label{pp'} 
  (\SP_{N} - \SP_{N}' )^3 = 
  \frac{\rho_{N-1}}{\rho_{N}} \,  (\SP_{N} - \SP_{N}' ) .
\end{equation}  
\end{proposition}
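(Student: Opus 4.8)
The plan is to evaluate $(\SP_N-\SP_N')^3$ by reducing it to two instances of the Wenzl recursion~(\ref{Pind}). Abbreviate $A=\SP_N$ and $B=\SP_N'$; both are idempotents, so expanding the cube and using only $A^2=A$ and $B^2=B$ gives the purely formal identity
\begin{equation*}
 (A-B)^3 = (A-B) - ABA + BAB .
\end{equation*}
Thus it suffices to compute the two triple products $ABA=\SP_N\SP_N'\SP_N$ and $BAB=\SP_N'\SP_N\SP_N'$.

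Let $C=\SP_{N-1}'$ denote the Jones--Wenzl idempotent on strands $2,\ldots,N$, i.e.\ the element obtained from $\SP_{N-1}$ by the shift $\ST_k\mapsto\ST_{k+1}$, so that $C$ is a polynomial in $1,\ST_2,\ldots,\ST_{N-1}$ with constant term $1$. Since every other term contains some $\ST_k$ with $2\le k\le N-1$, relation~(\ref{JW2}) yields the absorption identities $\SP_N\,C=C\,\SP_N=\SP_N$ and $\SP_N'\,C=C\,\SP_N'=\SP_N'$. Now I would write the recursion~(\ref{Pind}), shifted by one, in the form $B=C-\rho_{N-1}\,C\ST_N C$ and sandwich it between two copies of $A$; the absorption identities collapse $ACA=A$ and $AC\ST_N CA=\SP_N\ST_N\SP_N$, so that $ABA=\SP_N-\rho_{N-1}\,\SP_N\ST_N\SP_N$. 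Rearranging~(\ref{Pind}) as $\SP_N\ST_N\SP_N=\rho_N^{-1}(\SP_N-\SP_{N+1})$ then gives
\begin{equation*}
 ABA = \SP_N - \frac{\rho_{N-1}}{\rho_N}\,(\SP_N-\SP_{N+1}).
\end{equation*}

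To obtain $BAB$ I would apply the reflection automorphism $\phi(\ST_k)=\ST_{N+1-k}$ of $TL_{N+1}(Q)$ to this relation. Since $\phi$ carries the generators of $\SP_N$ to those of $\SP_N'$ and conversely, while fixing $\SP_{N+1}$, it swaps $A\leftrightarrow B$, whence $\phi(ABA)=BAB$ and $BAB=\SP_N'-\tfrac{\rho_{N-1}}{\rho_N}(\SP_N'-\SP_{N+1})$. The decisive feature is that the $\SP_{N+1}$ contributions to $ABA$ and $BAB$ coincide and hence cancel in the difference:
\begin{equation*}
 -ABA+BAB = \Bigl(\frac{\rho_{N-1}}{\rho_N}-1\Bigr)(\SP_N-\SP_N').
\end{equation*}
Substituting this into the formal identity of the first paragraph collapses everything to $(\SP_N-\SP_N')^3=\frac{\rho_{N-1}}{\rho_N}(\SP_N-\SP_N')$, which is~(\ref{pp'}).

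The one place that needs care is the step $\SP_N\ST_N\SP_N=\rho_N^{-1}(\SP_N-\SP_{N+1})$, which presupposes $\rho_N$ finite and $\SP_{N+1}$ well defined, whereas the hypothesis only guarantees $\rho_k$ finite for $k<N$. In the degenerate case $\rho_{N-1}=Q$ (so $\rho_N=\infty$ and $\rho_{N-1}/\rho_N$ is read as $0$) I would instead keep the unsimplified expressions $ABA=\SP_N-\rho_{N-1}\SP_N\ST_N\SP_N$ and $BAB=\SP_N'-\rho_{N-1}\SP_N'\ST_1\SP_N'$, which give $(\SP_N-\SP_N')^3=\rho_{N-1}(\SP_N\ST_N\SP_N-\SP_N'\ST_1\SP_N')$; as both sides are rational in $Q$, the claimed identity then follows from the generic case by continuity. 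I expect this bookkeeping around the special values of $Q$, rather than the main algebra, to be the only real subtlety.
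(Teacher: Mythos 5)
Your proof is correct, and it takes a genuinely different route from the paper's. The paper starts from the factorization $\SP_{N+1}-\SP'_{N+1}=\rho_N\,\SP'_N(\ST_{N+1}-\ST_1)\,\SP'_N$ (eq.~(\ref{pmp1})) and expands its square and cube head-on, which is what necessitates the four auxiliary relations (\ref{tpt1})--(\ref{tpt4}) of Lemma~\ref{TPrel} and a lengthy computation; you instead exploit idempotency of $A=\SP_N$, $B=\SP'_N$ through the formal identity $(A-B)^3=(A-B)-ABA+BAB$, compute $ABA$ in two lines from the shifted recursion (\ref{Pind}) and the absorption identities (your counterpart of (\ref{ppnil})), and obtain $BAB$ by reflection, after which the $\SP_{N+1}$ contributions cancel and (\ref{pp'}) drops out. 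This is shorter and makes transparent why it is the cube, not the square, of $\SP_N-\SP'_N$ that is proportional to $\SP_N-\SP'_N$. Two points deserve care. First, the claim that $\phi$ swaps $\SP_N\leftrightarrow\SP'_N$ and fixes $\SP_{N+1}$ does not follow merely from $\phi$ permuting the generators; it requires the reflection invariance of Jones--Wenzl projectors recorded in the paper right after (\ref{JW2}), so that remark should be invoked explicitly (the paper's own proof leans on it in the same way). Second, your main computation needs $\rho_N$ finite (so that $\SP_{N+1}$ exists), which the hypothesis does not grant; the paper avoids this issue structurally by proving the relation at level $N+1$, where only $1/\rho_{N+1}=Q-\rho_N$, always finite, ever enters. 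Your fallback for the case $\rho_{N-1}=Q$ does close this gap: the $\SP_{N+1}$-free identity $(\SP_N-\SP'_N)^3=\rho_{N-1}(\SP_N\ST_N\SP_N-\SP'_N\ST_1\SP'_N)$ holds under the stated hypothesis, and the passage to (\ref{pp'}) with $\rho_{N-1}/\rho_N$ read as $\rho_{N-1}(Q-\rho_{N-1})=0$ is legitimate, provided the ``continuity'' is phrased as an identity of rational functions: expand both sides in a basis of $TL_{N+1}(Q)$ whose structure constants are polynomial in $Q$ (the diagram basis), note that all coefficients involved are rational in $Q$ and regular wherever $\rho_k$, $k<N$, are finite, and use that rational functions agreeing for all but finitely many values of $Q$ agree wherever defined. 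What the paper's longer route buys in exchange is Lemma~\ref{TPrel} itself: relation (\ref{tpt1}) and its degenerate form $\ST_N\SP_N\ST_N=0$ are reused in the proof of Theorem~\ref{QRan}, so that machinery is not wasted there.
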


 We will use relation (\ref{pp'}) in the proof of the following statement.

\begin{theorem}\label{QRan}
Every solution $T \in M_{n^2}$ of (T1)--(T2) that
has rank $r$ falls into one of the four classes: \ \  \\
a) $r=n^2$, $Q=1$  (the trivial solution $T=I_{n^2}$);\\ \ \ \  
b) $r=n^2/2$, $Q=\sqrt{2}$;\\ 
c) $r=n^2/3$, $Q=\sqrt{3}$;\ \ \ \\
d) $r \leq n^2/4$, $Q \geq \max (2, \frac{n}{r})$.
\end{theorem}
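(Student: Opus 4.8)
The plan is to study the self-adjoint operator $H_N:=P_N-P'_N$, where $P_N=\tau(\SP_N)$ and $P'_N=\tau(\SP'_N)$. Since $T^*=T$ the representation~(\ref{tau}) is a $*$-representation, so $P_N,P'_N$ are orthogonal projections and $H_N=H_N^*$; by Proposition~\ref{HHrho} it obeys $H_N^3=c_NH_N$ with $c_N:=\rho_{N-1}/\rho_N$. A self-adjoint solution of $H^3=cH$ has spectrum in $\{0,\pm\sqrt c\}$, so $c_N\le 0$ forces every eigenvalue to vanish, i.e.\ $H_N=0$ and $P_N=P'_N$. Writing $W_N$ for the $N$-strand Jones--Wenzl projector, $P_N=W_N\otimes I_n$ and $P'_N=I_n\otimes W_N$, so $P_N=P'_N$ reads $W_N\otimes I_n=I_n\otimes W_N$; an operator $A$ with $A\otimes I_n=I_n\otimes A$ must be scalar (peel off the first and last tensor factor and induct). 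As $W_N\le \tau(\SP_2)$ has rank at most $(n^2-r)n^{N-2}<n^N$ for $r\ge1$, the scalar cannot be $1$, hence $W_N=0$. Setting $t:=r/n^2$ and $q_N(t):=n^{-N}\tr(P_N)$, the $Q$-independent trace polynomial of~(\ref{mtrnex}) (which satisfies $q_{N+1}=q_N-t\,q_{N-1}$, $q_0=q_1=1$, and is $\ge0$ whenever $\SP_N$ is defined), I obtain the central implication: $c_N=\rho_{N-1}/\rho_N\le 0$ implies $q_N(t)=0$.

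I would then record the closed forms $q_N(t)=t^{N/2}\sin((N{+}1)\varphi)/\sin\varphi$ with $\cos\varphi=1/(2\sqrt t)$, whose roots are $t=\bigl(4\cos^2(k\pi/(N{+}1))\bigr)^{-1}$, and $c_N=1-\sin^2\theta/\sin^2(N\theta)$ for $Q=2\cos\theta$. If $Q\ge2$ all $\rho_k$ are finite, so $q_N(t)\ge0$ for every $N$; since the closed form gives $q_N(t)<0$ for some $N$ as soon as $t>\tfrac14$ (because $(N{+}1)\varphi$ eventually lands in $(\pi,2\pi)$), we must have $t\le\tfrac14$, i.e.\ $r\le n^2/4$, and with the known bound $Qr\ge n$ this is case~d. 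If $0<Q\le1$ then $c_2\le0$ forces $q_2(t)=1-t=0$, so $r=n^2$ and $T=QI_{n^2}$; relation~(T2) then gives $Q^3=Q$, hence $Q=1$, which is the trivial case~a.

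The substantive regime is $1<Q<2$. First I would exclude irrational $\theta/\pi$: there $\{N\theta\}$ is dense, so some $c_N<0$ and $q_N(t)=0$, forcing $t>\tfrac14$, while the absence of poles forces $q_N(t)\ge0$ for all $N$, hence $t\le\tfrac14$ — a contradiction. So $\theta=(a/b)\pi$ with $\gcd(a,b)=1$ and a first pole of $\rho$ at $b-1$. Let $N_0$ be the least index with $q_{N_0}(t)=0$ (it exists, since $c_{b-1}=\rho_{b-2}/\rho_{b-1}=0$). The recursion gives
\begin{equation*}
 q_{N_0+1}(t)=q_{N_0}(t)-t\,q_{N_0-1}(t)=-t\,q_{N_0-1}(t)<0,
\end{equation*}
because $q_{N_0-1}(t)>0$ by minimality. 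A negative value is impossible whenever $\SP_{N_0+1}$ is defined, so $\rho_{N_0}=\infty$: the first zero of $q$ coincides with the first pole, $N_0=b-1$. The sign condition $q_N(t)>0$ for $N<N_0$ selects the largest root, $\varphi=\pi/b$, whence $\cos\varphi=\cos(\pi/b)$ and
\begin{equation*}
 t=\frac{1}{4\cos^2(\pi/b)}=\frac{1}{Q^2},\qquad\text{i.e.}\qquad n^2=Q^2r .
\end{equation*}
Finally $t=r/n^2\in\mathbb Q$ forces $\cos^2(\pi/b)\in\mathbb Q$; by Niven's theorem $b\in\{3,4,6\}$, and imposing $\gcd(a,b)=1$ with $\theta\in(0,\pi/3)$ leaves $a=1$, so $Q=2\cos(\pi/b)\in\{\sqrt2,\sqrt3\}$ with $Q^2=2,3$, giving cases~b and~c.

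The step I expect to be the main obstacle is the pinning of $Q$ rather than merely of $t$: the cubic relation and positivity only yield $q_{N_0}(t)=0$, which by itself would allow a whole interval of $Q$ at fixed $t\in\{\tfrac12,\tfrac13\}$. What closes the gap is the sign reversal $q_{N_0+1}(t)=-t\,q_{N_0-1}(t)<0$, which forces the next Jones--Wenzl projector to be \emph{undefined} and thereby locks the first trace-zero onto the first pole of $\rho$, producing the sharp equality $n^2=Q^2r$. The other delicate ingredient is knowing that $q_N(t)$ is genuinely a universal polynomial in $t=r/n^2$ (so that a negative value is a real contradiction and not an artifact of the chosen solution), for which I would rely on the diagrammatic trace computation underlying~(\ref{mtrnex}).
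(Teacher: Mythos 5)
Your proof is correct, but it follows a genuinely different organization from the paper's, even though the two share the same core ingredients (the cubic relation of Proposition~\ref{HHrho} plus self-adjointness, the $Q$-independent trace polynomial, and an algebraic-rationality argument). The paper first eliminates all $Q<2$ with $Q\notin J_\infty$ in one stroke: for a suitable $N$ one has $\rho_{N-1}/\rho_N<0$, and squaring (\ref{pp'}) equates a nonzero positive semidefinite matrix with a negative semidefinite one (non-vanishing being ensured by showing that $P_N=P_N'$ would force $P_N=0$ and then, by a downward induction based on (\ref{tpt1}) and the (T2) trick of multiplying by $T_{N+1,N+2}$, the absurdity $P_1=0$). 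It then treats $Q\geq 2$ by trace positivity exactly as you do, and treats $Q\in J_\infty$ by proving $P_N=0$ directly at the pole ($\ST_N\SP_N\ST_N=0$), whence $s=n^2/r=4\cos^2(\pi l/(N+1))$ is a rational algebraic integer, so $s\in\{1,2,3\}$, and positivity of $\tr(P_m)$ for $m<N$ pins $N=2,3,5$ and hence $Q=1,\sqrt2,\sqrt3$. You instead run a single ``key implication'' $\rho_{N-1}/\rho_N\le 0\Rightarrow P_N=P_N'\Rightarrow P_N=0\Rightarrow q_N(t)=0$ in every regime, with the middle step done by pure linear algebra (partial-trace peeling shows $A\otimes I_n=I_n\otimes A$ forces $A$ scalar, and a rank count kills the scalar) instead of the paper's (T2) manipulation; you split $1<Q<2$ by rationality of $\theta/\pi$ rather than by membership in $J_\infty$, dispose of the irrational case by equidistribution, and in the rational case lock the first zero of the trace polynomial onto the first pole of $\rho$ via the sign reversal $q_{N_0+1}(t)=-t\,q_{N_0-1}(t)<0$, finishing with Niven's theorem where the paper uses the rational-algebraic-integer lemma (the two tools are interchangeable here). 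What your route buys is uniformity --- one mechanism covers $Q\le 1$, the irrational case, and the pole case --- and a representation-free linear-algebra substitute for the (T2) trick; what the paper's route buys is that the entire region $Q<2$, $Q\notin J_\infty$ dies at once, with no need for equidistribution or for a separate discussion of $a>1$ in $\theta=a\pi/b$.

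Two small points to tidy. First, when you invoke Proposition~\ref{HHrho} at $N=b-1$, where $\rho_{b-1}=\infty$, you are reading the coefficient $\rho_{b-2}/\rho_{b-1}$ as $0$; this is legitimate (the coefficients of both sides of (\ref{pp'}) are rational functions of $Q$ and the identity holds generically, hence wherever both sides are defined --- alternatively, use the paper's direct observation that $\ST_N\SP_N\ST_N=0$ at the pole), but it deserves an explicit sentence. Second, your display asserts $t=1/(4\cos^2(\pi/b))=1/Q^2$ before $a=1$ has been established; the second equality only follows after Niven's theorem and the coprimality-plus-range argument force $a=1$, so those steps should be reordered --- all the needed ingredients are already in your text.
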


Thus, for a unitary tensor space representation of $TL_N(Q)$,
only three values of $Q$ from the set $J_\infty$ can occur,
and in general the range of $Q$ depends on $n$ and $r$.   
 
Let us mention  the following corollaries to Theorem~\ref{QRan}. 
\begin{proposition}\label{ColQR}
a) Apart from the trivial solution in the $n=2$, $r=4$ case, there exists no 
solution $T \in M_{n^2}$ to (T1)--(T2) of rank $r \leq 5$ such that $n<r$. \\
b)  The  constant R-matrix, $R = e^{\gamma} I_{n^2} - T$, where 
$T$ is a solution to (T1)--(T2) with $Q=e^{\gamma}+e^{-\gamma}$,
is unitary  only if $n^2=Q^2 r$ with  $Q \in \{1,\sqrt{2},\sqrt{3}\}$ 
or if $n^2= 4r +k^2$, $k \in {\mathbb Z}$ and $Q=2$.
\end{proposition}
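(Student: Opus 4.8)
Both statements are corollaries of Theorem~\ref{QRan}, so the plan is to feed the four-class dichotomy into two short arguments. For part~(a) I would simply run a rank-bounded solution $T\in M_{n^2}$ with $r\le5$ and $n<r$ through the classes. Class~(a) gives $r=n^2$, and $n<n^2\le5$ leaves only $n=2$, $r=4$, which is the admitted trivial solution $T=I_4$. In each remaining class the class constraint together with $n<r$ forces $r\ge6$: class~(b) gives $r=n^2/2$ with $n>2$, and since $r$ is an integer $n$ must be even, hence $n\ge4$ and $r\ge8$; class~(c) gives $r=n^2/3$ with $n>3$, so $n\ge4$ and $r\ge16/3>5$; class~(d) gives $r\le n^2/4$ with $n<r$, whence $n<n^2/4$, i.e.\ $n\ge5$ and $r>n\ge5$. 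Thus no solution with $r\le5$ and $n<r$ exists besides $n=2$, $r=4$.

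For part~(b) the one genuine computation is the unitarity condition on $R=e^{\gamma}I_{n^2}-T$. Using $T^*=T$ and $T^2=QT$ I would expand
\begin{equation*}
 R R^* = e^{2\,\mathrm{Re}\,\gamma}\,I_{n^2} + \bigl(Q-e^{\gamma}-e^{\bar\gamma}\bigr)\,T .
\end{equation*}
For a nontrivial solution $I_{n^2}$ and $T$ are linearly independent, so $RR^*=I_{n^2}$ forces $\mathrm{Re}\,\gamma=0$; and when $\mathrm{Re}\,\gamma=0$ one has $e^{\gamma}+e^{\bar\gamma}=2\cos(\mathrm{Im}\,\gamma)=Q$, so the coefficient of $T$ vanishes automatically. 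Since $e^{\gamma}+e^{-\gamma}=Q$ with $Q$ real forces $\gamma$ to be either real or purely imaginary, $\mathrm{Re}\,\gamma=0$ is attainable exactly when $Q=2\cos(\mathrm{Im}\,\gamma)\le2$. Hence $R$ is unitary if and only if $Q\le2$ (the scalar case $T=QI_{n^2}$, i.e.\ class~(a), is checked directly and also satisfies $Q\le2$).

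It then remains to combine $Q\le2$ with Theorem~\ref{QRan}. The condition $Q\le2$ restricts the solution to classes (a), (b), (c), or to class~(d) with $Q=2$. The first three are exactly $n^2=Q^2r$ with $Q\in\{1,\sqrt2,\sqrt3\}$. In the remaining case $Q=2$ (where indeed $R=I_{n^2}-T$ is automatically unitary, since $(I_{n^2}-T)^2=I_{n^2}-2T+T^2=I_{n^2}$), Theorem~\ref{QRan} yields $r\le n^2/4$, and the sharper necessary condition $n^2=4r+k^2$ with $k\in\mathbb Z$ is imported from the $Q=2$ existence criterion of~\cite{LPW} (cf.\ Section~\ref{QQ2}). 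This gives the stated alternative. There is no single hard step once Theorem~\ref{QRan} is available; the only points needing care are the linear-independence argument that pins $\mathrm{Re}\,\gamma=0$, the bookkeeping of integrality in part~(a), and the correct invocation of the $Q=2$ criterion in part~(b).
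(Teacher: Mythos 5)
Your proposal is correct and follows essentially the same route as the paper: part (a) is a finite case check against the four classes of Theorem~\ref{QRan}, and part (b) derives $\mathrm{Re}\,\gamma=0$, hence $Q\le 2$, from unitarity and then invokes Theorem~\ref{QRan} for $Q<2$ and the criterion of Proposition~\ref{SolQ2} for $Q=2$. The only (cosmetic) difference is in part (b): the paper compares $R^*$ with the explicit inverse $R^{-1}=e^{-\gamma}I_{n^2}-T$, where the $T$-terms cancel, so it never needs your linear-independence argument or the separate scalar-case check.
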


\section{Sums and products of solutions}

Theorem~\ref{QRan} gives necessary conditions for 
existence of a unitary tensor space representation of $TL_N(Q)$. 
Clearly, a solution of the class a) exists for any $n$. For the classes b) and c),
examples of solutions are known (see Example \ref{ExQsq2} and Example \ref{ExQsq3}
below). A solution of the class d) for $r=1$, $n = 2$ with $Q$ varying in the range
$Q \in [2,\infty)$ is very well known (it admits a multi-parameter analogue
for $r=1$, $n \geq 2$ and $Q$ varying in the range $Q \in [n,\infty)$,
see Corollary 1 in \cite{By2} for its general form).
But for $r \geq 2$, explicit solutions to (T1)--(T2) have been constructed so far
only in few cases: solutions for $Q^2=r=n \geq 3$ were found in \cite{Kul3,WX},
families of solutions for $r=2$, $n=0 \pmod 3$ and $r=2$, $n=0 \pmod 4$ 
with $Q$ varying, respectively, in the range $Q \in [2n/3,\infty)$ and $Q \in [n/\sqrt{2},\infty)$
were  constructed in~\cite{By2}. 
In the rest of this article, we will provide some more explicit solutions to (T1)--(T2).

To this aim, we will first recall that the system (T1)--(T2) is equivalent to a condition
that certain partitioned matrix is unitary and then we will give three constructions that 
can be used to build new solutions to (T1)--(T2) from already known ones.

\subsection{Unitarity condition, examples with $Q<2$}

Let $e_1$, \ldots, $e_n$ be an orthonormal basis of ${\mathbb C}^n$. 
Then a matrix $V \in M_n$ defines a vector in 
${\mathbb C}^n \,{\otimes}\, {\mathbb C}^n$: 
$v = \sum_{a,b=1}^n V_{ab} \, e_a \otimes e_b$.
Similarly, a set of matrices $V_1$, \ldots, $V_r \in M_n$ such that
\begin{equation}\label{vv}
   \tr \bigl( V_s^*  V_m \bigr) 
  =\delta_{sm}  
\end{equation}
defines an orthonormal set of vectors  $v_1$, \ldots, $v_r$ in 
${\mathbb C}^n \,{\otimes}\, {\mathbb C}^n$
and thus the subspace $\CT$ spanned by these vectors
(we will write $\CT \sim \{V_1$, \ldots, $V_r\}$ always assuming that
condition (\ref{vv}) holds).
In the chosen basis, the orthogonal projection on $\CT$ is given by
\begin{equation}\label{PiTau}
  P_\CT = 
 \sum_{s=1}^r \sum_{a,b,c,d=1}^n (V_s)_{ab} \, 
 (\bar{V}_s)_{cd} \  E^{(n)}_{ac} \otimes E^{(n)}_{bd} ,
\end{equation}
where $\bar{V}$ denotes the complex conjugate of $V$ and 
$E^{(n)}_{ab} \in M_n$ denotes the matrix unit such that 
$(E^{(n)}_{ab})_{ij} = \delta_{ai}  \delta_{bj}$.

\begin{ex}\label{trivsol}
Let $\CT \sim \{V_1,\ldots, V_{n^2}\}$, where
$V_1=E^{(n)}_{11},\ldots, V_{n^2}=E^{(n)}_{nn}$ is the 
set of all matrix units in $M_n$. Then  $T=P_\CT=I_{n^2}$ is 
the trivial solution to (T1)--(T2) of rank $r=n^2$.  
\end{ex}

Given a subspace $\CT \sim \{V_1,\ldots, V_{r}\}$, $V_k \in M_n$, 
let us denote by $W_{\CT} \in M_{r n}$ be the following partitioned matrix
\begin{align}\label{WV} 
  W_{\CT}   
  = \sum_{s,m=1}^r  E^{(r)}_{sm} \otimes V_m \bar{V}_s \,.
\end{align} 
\begin{proposition}[\cite{By1}, Theorem~2]\label{Wuni}
$T=Q P_\CT$, where $\CT \sim \{V_1,\ldots,V_r\}$ and $Q > 0$,  is a solution to
(T1)--(T2) if and only if $Q W_\CT$ is a unitary matrix.
\end{proposition}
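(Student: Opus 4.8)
The plan is to reduce the system (T1)--(T2) for $T = Q P_\CT$ to a one-sided unitarity condition on $W_\CT$. First note that (T1) is automatic: since $P_\CT$ is an orthogonal projection, $P_\CT^* = P_\CT$ and $P_\CT^2 = P_\CT$, so $T^* = Q P_\CT = T$ and $T^2 = Q^2 P_\CT = Q T$ for every $Q > 0$. Thus the whole content of the statement lies in (T2). Writing $(P_\CT)_{12} = P_\CT \otimes I_n$ and $(P_\CT)_{23} = I_n \otimes P_\CT$ and substituting $T = Q P_\CT$, the two relations in (T2) become $Q^2 (P_\CT)_{12}(P_\CT)_{23}(P_\CT)_{12} = (P_\CT)_{12}$ and $Q^2 (P_\CT)_{23}(P_\CT)_{12}(P_\CT)_{23} = (P_\CT)_{23}$.

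Next I would read each of these as a statement about a compression. Because $(P_\CT)_{12}$ is the orthogonal projection onto $\CT \otimes \mathbb{C}^n$, the first relation holds if and only if the compression of $(P_\CT)_{23}$ to $\CT \otimes \mathbb{C}^n$ equals $Q^{-2}$ times the identity on that subspace. Since the vectors $v_s$ are orthonormal, the set $\{v_s \otimes e_c : 1 \le s \le r,\ 1 \le c \le n\}$ is an orthonormal basis of $\CT \otimes \mathbb{C}^n$, and the compression is represented by the $rn \times rn$ matrix $M$ with entries $M_{(s,i),(m,j)} = \langle v_s \otimes e_i \mid (P_\CT)_{23} \mid v_m \otimes e_j\rangle$. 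The first relation of (T2) is thus equivalent to $M = Q^{-2} I_{rn}$, and similarly the second is equivalent to $M' = Q^{-2} I_{rn}$, where $M'$ is the analogous compression of $(P_\CT)_{12}$ to $\mathbb{C}^n \otimes \CT$ in the orthonormal basis $\{e_c \otimes v_s\}$.

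The heart of the argument is the explicit evaluation of $M$ and $M'$. Expanding $v_s = \sum_{a,b}(V_s)_{ab}\, e_a \otimes e_b$, inserting the formula (\ref{PiTau}) for $P_\CT$, and contracting the resulting Kronecker deltas, one reduces each matrix element to a sum over the entries of the $V$'s; collecting the sums into matrix products, I expect to obtain exactly $M = W_\CT^* W_\CT$ and $M' = W_\CT W_\CT^*$, with the block index $(s,m)$ and the inner index matching the definition (\ref{WV}) of $W_\CT$. Granting this, the first relation of (T2) reads $Q^2 W_\CT^* W_\CT = I_{rn}$ and the second reads $Q^2 W_\CT W_\CT^* = I_{rn}$; since $W_\CT$ is a square matrix, each of these is by itself equivalent to $Q W_\CT$ being unitary, and in particular they are equivalent to one another. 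Combining this with the automatic validity of (T1) yields the claimed equivalence.

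The main obstacle is the bookkeeping in this last computation: one must track complex conjugates and transposes carefully so that the raw index sum is recognized as $W_\CT^* W_\CT$ in precisely the block and internal ordering fixed by (\ref{WV}) (rather than, say, $W_\CT W_\CT^*$ or a transposed variant), and one must verify that the two compressions land on the two different one-sided products. Everything else is routine once this identification is pinned down; the orthonormality (\ref{vv}) of the $V_s$ is what guarantees that $\{v_s \otimes e_c\}$ is orthonormal and hence that ``compression equals $Q^{-2} I_{rn}$'' is the correct reformulation.
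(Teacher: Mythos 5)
The paper never proves this proposition: it is imported from \cite{By1} (Theorem~2 there), so there is no internal argument to compare yours against. Judged on its own merits, your reduction is correct, and the identification you deferred (``I expect to obtain $M = W_\CT^*W_\CT$'') --- which is indeed the entire content of the statement --- does hold exactly as you predicted, with the block conventions of (\ref{WV}). To close that gap: expanding $v_m=\sum_{a,b}(V_m)_{ab}\,e_a\otimes e_b$ and using (\ref{PiTau}), the compression of $I_n\otimes P_\CT$ to $\CT\otimes{\mathbb C}^n$ in the basis $\{v_s\otimes e_i\}$ comes out as
\begin{equation*}
M_{(s,i),(m,j)}=\sum_{t=1}^r\bigl(V_t^{\,t}\,V_s^{*}\,V_m\,\bar V_t\bigr)_{ij},
\end{equation*}
while multiplying out $W_\CT^{*}W_\CT$ with $E^{(r)}_{ms}E^{(r)}_{s'm'}=\delta_{ss'}E^{(r)}_{mm'}$ shows that the $(s,m)$ block of $W_\CT^{*}W_\CT$ is precisely $\sum_t V_t^{\,t}V_s^{*}V_m\bar V_t$; hence $M=W_\CT^{*}W_\CT$ under the identification $v_s\otimes e_i\mapsto f_s\otimes e_i$, where $f_s$ is the standard basis of ${\mathbb C}^r$. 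The analogous contraction for the compression of $P_\CT\otimes I_n$ to ${\mathbb C}^n\otimes\CT$ gives blocks $\sum_t V_t\bar V_s V_m^{\,t}V_t^{*}$, which are exactly the blocks of $W_\CT W_\CT^{*}$. With these two identities, your chain of equivalences is sound: (T1) is automatic for $T=QP_\CT$, the first relation of (T2) is $Q^2W_\CT^{*}W_\CT=I_{rn}$, the second is $Q^2W_\CT W_\CT^{*}=I_{rn}$, and since $W_\CT$ is square each one alone is equivalent to unitarity of $QW_\CT$. The only thing separating your proposal from a complete proof is that the displayed computation must actually be carried out rather than anticipated, since a transposed or swapped-index variant (e.g.\ landing on blocks $\sum_t V_t^{\,t}V_m^{*}V_s\bar V_t$) would still have square shape and would still yield the same unitarity conclusion, but would not literally match the definition (\ref{WV}); as verified above, the match is exact.
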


Let us list some known solutions to (T1)--(T2) in terms of matrices $V_1,\ldots,V_r$.

\begin{ex}\label{stansol}
For $n \geq 1$ and $z \in {\mathbb C} \setminus \{0\}$, let $V^{(n)} \in M_n$ 
and $Q_n(z)$ be given by
\begin{equation}\label{vnq}
V^{(n)} = \gamma_n \sum_{k=0}^{n-1} z^{k} E^{(n)}_{k+1,n-k} , \qquad
Q_n(z)  = \sum_{k=0}^{n-1} |z|^{2k+1-n} , 
\end{equation}
where $\gamma_n = \bigl( \sum_{k=0}^{n-1} |z|^{2k} \bigr)^{-1/2}$. 
Then $T= Q_n(z) P_{\CT} \in M_{n^2}$, where $\CT \sim \{V^{(n)}\}$, is a solution to (T1)--(T2)
of rank $r=1$. The expression for $Q_n(z)$ readily implies that $Q_n(z) \in [n,\infty)$.
The corresponding constant R-matrix is related to the highest weight 
representation of $U_q(\mathfrak{sl}_2)$ of weight $(n-1)$, cf.\ \cite{BK}.
\end{ex}
 
\begin{ex}\label{ExQsq2}
 Let   $V_1, V_2 \in M_2$ be given by
\begin{equation}\label{qrt2}
 V_1 = \frac{1}{\sqrt{2}} \bigl( E^{(2)}_{11} + E^{(2)}_{22} \bigr) , \qquad
 V_2 = \frac{1}{\sqrt{2}} \bigl( \sqrt{-1} \, E^{(2)}_{12} + E^{(2)}_{21} \bigr) .
 \end{equation}
Then $ T = Q P_{\CT} \in M_4$, where $\CT \sim \{V_1, V_2\}$ and $Q=\sqrt{2}$, is 
a solution to (T1)--(T2) of rank $r=2$. The corresponding constant R-matrix was 
found in~\cite{Hi1}.
\end{ex}

\begin{ex}\label{ExQsq3} 
Set $q = e^{2 \pi \sqrt{-1} /3}$.
Let   $V_1, V_2, V_3 \in M_3$ be given by
\begin{equation}\label{qrt3}
\begin{aligned}
 V_1 = \frac{1}{\sqrt{3}} \bigl( E^{(3)}_{13} + E^{(3)}_{22} + E^{(3)}_{31}\bigr) , & \quad
 V_2 = \frac{1}{\sqrt{3}} \bigl( q \, E^{(3)}_{12} + E^{(3)}_{21} + E^{(3)}_{33}\bigr),  \\
 V_3 = \frac{1}{\sqrt{3}}  \bigl( E^{(3)}_{11} & +    \bar{q} \, E^{(3)}_{23} + E^{(3)}_{32} \bigr). 
\end{aligned}
\end{equation}
Then $ T = Q P_{\CT} \in M_9$, where $\CT \sim \{V_1, V_2, V_3\}$ and $Q=\sqrt{3}$, is 
a solution to (T1)--(T2) of rank $r=3$. This solution was used in \cite{GJ}, see also \cite{WX}.
\end{ex}

To conclude this section, we recall that if $\CT \sim \{V_1,\ldots,V_r\}$, $V_k \in M_n$,  
defines a solution to (T1)--(T2) and $g \in M_n$ is a unitary matrix,
then $\CT' \sim \{V'_1,\ldots,V'_r\}$,
where $V_k' = g V_k g^t$ for $k=1,\ldots,r$, defines a unitarily equivalent solution.

\subsection{Sums and products of solutions}

For $X \in M_{n}$, $Y \in M_{m}$, let
$X \oplus Y \in M_{n+m}$ denote the block diagonal matrix with blocks $X$, $Y$.
Given two solutions to (T1)--(T2) of the same rank, we can construct their
``direct sum'' in the following sense.

\begin{proposition}\label{Tadd}
Let $\CT_1 \sim \{V^{(1)}_1, \ldots, V^{(1)}_r\}$, where 
$V^{(1)}_k \in M_{n_1}$, and 
$\CT_2 \sim \{V^{(2)}_1, \ldots, V^{(2)}_r\}$, where $V^{(2)}_k \in M_{n_2}$.
Suppose that $T_1=Q_1 P_{\CT_1} \in M_{n_1^2}$ and 
$T_2=Q_2 P_{\CT_2} \in M_{n_2^2}$  satisfy (T1)--(T2) for 
some $Q_1 , Q_2 > 0$. 
Set $\tilde{\CT} \sim \{\tilde{V}_1$, \ldots, $\tilde{V}_r\}$, 
where $\tilde{V}_k \in M_{n_1+n_2}$ is given by 
\begin{align}\label{dirsumV} 
 \tilde{V}_k = \frac{1}{ \sqrt{Q_1+Q_2}} 
( \sqrt{Q_1} V^{(1)}_k) \oplus ( \sqrt{Q_2} V^{(2)}_k) .
\end{align} 
Then 
$\tilde{T}=(Q_1 + Q_2) P_{\tilde{\CT}} \in  M_{(n_1+n_2)^2}$ 
is a solution to (T1)--(T2) of rank $r$.
\end{proposition}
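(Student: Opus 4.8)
The plan is to reduce the claim to the unitarity criterion of Proposition~\ref{Wuni}. Since the proposed solution is $\tilde{T}=(Q_1+Q_2)P_{\tilde{\CT}}$, it suffices to check two things: that the set $\{\tilde{V}_1,\ldots,\tilde{V}_r\}$ obeys the orthonormality condition~(\ref{vv}), so that $\tilde{\CT}$ and $P_{\tilde{\CT}}$ are well defined and Proposition~\ref{Wuni} applies; and that $(Q_1+Q_2)\,W_{\tilde{\CT}}\in M_{r(n_1+n_2)}$ is unitary. The whole argument rests on the block-diagonal shape of the $\tilde{V}_k$: a product of block-diagonal matrices is again block diagonal, and the matrix trace is additive over diagonal blocks.

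First I would verify~(\ref{vv}). Writing $\tilde{V}_k=\frac{1}{\sqrt{Q_1+Q_2}}\bigl((\sqrt{Q_1}\,V^{(1)}_k)\oplus(\sqrt{Q_2}\,V^{(2)}_k)\bigr)$ and using that $\tilde{V}_s^{*}\tilde{V}_m$ is block diagonal, one gets
\begin{equation*}
\tr\bigl(\tilde{V}_s^{*}\tilde{V}_m\bigr)=\frac{Q_1\,\tr\bigl((V^{(1)}_s)^{*}V^{(1)}_m\bigr)+Q_2\,\tr\bigl((V^{(2)}_s)^{*}V^{(2)}_m\bigr)}{Q_1+Q_2}=\delta_{sm},
\end{equation*}
where the last equality uses that $\CT_1$ and $\CT_2$ each satisfy~(\ref{vv}). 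In particular the $\tilde{V}_k$ are linearly independent, so $\dim\tilde{\CT}=r$ and $\mathrm{rank}(\tilde{T})=\mathrm{rank}(P_{\tilde{\CT}})=r$; this settles the rank statement.

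The main computation is that of $W_{\tilde{\CT}}$. Because $\tilde{V}_m$ and $\bar{\tilde{V}}_s$ are block diagonal (the real scalars $\sqrt{Q_i}$ are unchanged by conjugation), so is their product,
\begin{equation*}
\tilde{V}_m\bar{\tilde{V}}_s=\frac{1}{Q_1+Q_2}\bigl((Q_1\,V^{(1)}_m\bar{V}^{(1)}_s)\oplus(Q_2\,V^{(2)}_m\bar{V}^{(2)}_s)\bigr).
\end{equation*}
Substituting this into the definition~(\ref{WV}) and permuting the basis of ${\mathbb C}^{n_1+n_2}={\mathbb C}^{n_1}\oplus{\mathbb C}^{n_2}$ so as to separate the two summands, one finds
\begin{equation*}
(Q_1+Q_2)\,W_{\tilde{\CT}}=(Q_1 W_{\CT_1})\oplus(Q_2 W_{\CT_2}),
\end{equation*}
with the two diagonal blocks acting on ${\mathbb C}^r\otimes{\mathbb C}^{n_1}$ and ${\mathbb C}^r\otimes{\mathbb C}^{n_2}$. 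By Proposition~\ref{Wuni} applied to the hypotheses on $T_1$ and $T_2$, each block $Q_i W_{\CT_i}$ is unitary, hence so is their direct sum and, being permutation-similar to it, the matrix $(Q_1+Q_2)\,W_{\tilde{\CT}}$. Proposition~\ref{Wuni} then yields that $\tilde{T}=(Q_1+Q_2)P_{\tilde{\CT}}$ solves (T1)--(T2), which completes the proof.

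I do not anticipate a genuine difficulty; the single point deserving care is the passage to the last display, i.e.\ that after the basis permutation $(Q_1+Q_2)\,W_{\tilde{\CT}}$ is exactly the direct sum of $Q_1 W_{\CT_1}$ and $Q_2 W_{\CT_2}$ and not some block-coupled matrix. This is guaranteed precisely because each $\tilde{V}_k$ is block diagonal, so the ${\mathbb C}^{n_1}$ and ${\mathbb C}^{n_2}$ sectors never mix in any product $\tilde{V}_m\bar{\tilde{V}}_s$.
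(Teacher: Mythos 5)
Your proof is correct and takes essentially the same route as the paper's: both reduce the claim to the unitarity criterion of Proposition~\ref{Wuni} and exploit the block-diagonal form of the $\tilde{V}_k$. The only cosmetic difference is that the paper checks the componentwise unitarity equations (its equation~(\ref{vvvvI})) block by block, whereas you exhibit $(Q_1+Q_2)\,W_{\tilde{\CT}}$ directly as permutation-similar to $(Q_1 W_{\CT_1})\oplus(Q_2 W_{\CT_2})$; you also make the rank-$r$ claim explicit, which the paper leaves implicit.
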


\begin{ex}\label{Q2dirsum}
Let $n$ be even and
$V_1=E^{(\frac{n}{2})}_{11},\ldots, V_{(\frac{n}{2})^2}=E^{(\frac{n}{2})}_{\frac{n}{2}\frac{n}{2}}$
be the set of all matrix units in $M_{\frac{n}{2}}$. Set 
$\tilde{V}_k = \frac{1}{\sqrt{2}} V_k \oplus V_k$.
Then $T =  Q P_{\CT} \in M_{n^2}$, where $Q=2$ and 
$\CT \sim \{\tilde{V}_1,\ldots,\tilde{V}_{(\frac{n}{2})^2}\}$,  
is a solution to (T1)--(T2) of rank $r=n^2/4$. 
Thus,  even if both $T_1$ and $T_2$ in Proposition~\ref{Tadd} are trivial solutions, 
their ``direct sum'' is not a trivial solution.
\end{ex}

If $T_1$ and $T_2$ are solutions to (T1)--(T2) and one of them has rank one, 
we can construct their product in the following sense (we use here
the rank one solution defined in Example~\ref{stansol} but it obviously can be
replaced with any other solution of rank one). 

\begin{proposition}\label{VxU}
For $n, m \geq 2$ and $z \in {\mathbb C} \setminus \{0\}$, 
let $V^{(n)}$ and $Q_n(z)$ be given by (\ref{vnq}) and let 
$\CT \sim \{ V_1,\ldots,V_r\}$, where  $V_k \in M_{m}$. 
Suppose that $T= Q P_\CT \in M_{m^2}$ satisfies (T1)--(T2) for some $Q > 0$.
Let $\tilde{V}_k \in M_{mn}$ and $\tilde{Q}(z)$ be given  by
\begin{equation}\label{vmultun} 
\tilde{V}_k = V_k \otimes V^{(n)}  , \qquad  
  \tilde{Q}(z) = Q_n(z) \, Q .
\end{equation} 
Then $\tilde{T} = \tilde{Q}(z) P_{\tilde{\CT}} \in M_{m^2 n^2}$, where 
$\tilde{\CT} \sim \{\tilde{V}_1,\ldots,\tilde{V}_r\}$,  
is a solution to (T1)--(T2) of rank~$r$.
\end{proposition}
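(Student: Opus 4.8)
The plan is to verify the criterion of Proposition~\ref{Wuni}: it suffices to show that the partitioned matrix $\tilde{Q}(z)\, W_{\tilde{\CT}}$ associated to $\tilde{\CT} \sim \{\tilde{V}_1,\ldots,\tilde{V}_r\}$ is unitary, where $\tilde{V}_k = V_k \otimes V^{(n)}$ and $\tilde{Q}(z) = Q_n(z)\,Q$. First I would confirm the orthonormality condition (\ref{vv}) for the $\tilde V_k$, so that the claim ``rank $r$'' makes sense: since $\tr\bigl(\tilde V_s^* \tilde V_m\bigr) = \tr\bigl((V_s \otimes V^{(n)})^*(V_m \otimes V^{(n)})\bigr) = \tr(V_s^* V_m)\,\tr\bigl((V^{(n)})^* V^{(n)}\bigr)$, this factorizes as $\delta_{sm}\cdot 1 = \delta_{sm}$, using that $V^{(n)}$ is normalized (the factor $\gamma_n$ in (\ref{vnq}) is chosen precisely so that $\tr\bigl((V^{(n)})^* V^{(n)}\bigr)=1$) and that $\{V_1,\ldots,V_r\}$ already satisfies (\ref{vv}).

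The main computation is to factor $W_{\tilde{\CT}}$. From the definition (\ref{WV}),
\begin{equation*}
 W_{\tilde{\CT}} = \sum_{s,m=1}^r E^{(r)}_{sm} \otimes \tilde V_m \bar{\tilde V}_s
 = \sum_{s,m=1}^r E^{(r)}_{sm} \otimes \bigl(V_m \bar V_s\bigr) \otimes \bigl(V^{(n)} \bar V^{(n)}\bigr),
\end{equation*}
since $\tilde V_m \bar{\tilde V}_s = (V_m \otimes V^{(n)})(\bar V_s \otimes \bar V^{(n)}) = (V_m \bar V_s)\otimes(V^{(n)}\bar V^{(n)})$ by the mixed-product property of the Kronecker product. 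The factor $V^{(n)}\bar V^{(n)}$ is independent of $s,m$ and pulls out of the sum, so after the natural reshuffling of tensor legs one obtains $W_{\tilde{\CT}} = W_{\CT} \otimes \bigl(V^{(n)}\bar V^{(n)}\bigr)$, where $W_{\CT}=\sum_{s,m} E^{(r)}_{sm}\otimes V_m\bar V_s$ is the partitioned matrix of the original rank-$r$ solution. Hence
\begin{equation*}
 \tilde{Q}(z)\, W_{\tilde{\CT}} = \bigl(Q\, W_{\CT}\bigr) \otimes \bigl(Q_n(z)\, V^{(n)} \bar V^{(n)}\bigr).
\end{equation*}

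Now I would invoke Proposition~\ref{Wuni} twice. Since $T=QP_{\CT}$ solves (T1)--(T2), the factor $Q\,W_{\CT}$ is unitary by hypothesis. Since $T'=Q_n(z)P_{\CT'}$ with $\CT'\sim\{V^{(n)}\}$ is the rank-one solution of Example~\ref{stansol}, the single-block matrix $W_{\CT'} = V^{(n)}\bar V^{(n)}$ (the $r=1$ case of (\ref{WV})) has the property that $Q_n(z)\,V^{(n)}\bar V^{(n)}$ is unitary, again by Proposition~\ref{Wuni}. A Kronecker product of two unitary matrices is unitary, so $\tilde{Q}(z)\,W_{\tilde{\CT}}$ is unitary, and Proposition~\ref{Wuni} then gives that $\tilde T=\tilde Q(z)P_{\tilde{\CT}}$ is a solution to (T1)--(T2) of rank $r$. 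The only genuine obstacle is bookkeeping: the identity $W_{\tilde{\CT}}=W_{\CT}\otimes(V^{(n)}\bar V^{(n)})$ requires a permutation of the tensor factors to bring it into standard block form, and one must check this permutation does not disturb unitarity — it does not, since conjugation by a permutation matrix preserves unitarity. Everything else is a direct consequence of the multiplicativity of the Kronecker product over matrix multiplication.
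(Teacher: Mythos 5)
Your proof is correct and is essentially the paper's own argument: the paper likewise observes that $Q_n(z)\, V^{(n)} \bar{V}^{(n)}$ is unitary and that $W_{\tilde{\CT}} = W_{\CT} \otimes \bigl(V^{(n)} \bar{V}^{(n)}\bigr)$, then invokes Proposition~\ref{Wuni}. One minor simplification: the ``reshuffling of tensor legs'' you worry about at the end is not needed at all, since $E^{(r)}_{sm} \otimes \bigl((V_m \bar{V}_s) \otimes (V^{(n)} \bar{V}^{(n)})\bigr) = \bigl(E^{(r)}_{sm} \otimes V_m \bar{V}_s\bigr) \otimes \bigl(V^{(n)} \bar{V}^{(n)}\bigr)$ holds exactly by associativity of the Kronecker product, so the factorization of $W_{\tilde{\CT}}$ is an identity of matrices, not merely an equality up to permutation.
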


Note that the range of $Q_n(z)$ in (\ref{vmultun})  is $Q_n(z) \in [n\, Q, \infty)$.

\begin{ex}\label{Q2dirprod}
Let $n$ be even and
$V_1=E^{(\frac{n}{2})}_{11},\ldots, V_{(\frac{n}{2})^2}=E^{(\frac{n}{2})}_{\frac{n}{2}\frac{n}{2}}$
be the set of all matrix units in $M_{\frac{n}{2}}$. 
For $z \in {\mathbb C} \setminus \{0\}$, 
let $\tilde{V}_k \in M_{n}$ and $Q(z)$ be given by 
\begin{equation}\label{vunq2} 
  \tilde{V}_k = \frac{1}{\sqrt{ 1 + |z|^2 }} \, 
  V_k \otimes ( E^{(2)}_{12} + z E^{(2)}_{21} ) , \qquad
  Q(z) = |z| + 1/|z| .
\end{equation} 
Then $T =  Q(z) P_{\CT} \in M_{n^2}$, where
$\CT \sim \{\tilde{V}_1,\ldots,\tilde{V}_{(\frac{n}{2})^2}\}$,  
is a solution to (T1)--(T2) of rank $r=n^2/4$ with $Q$ varying in the range 
$Q(z) \in [2, \infty)$.
\end{ex}

Let us compare multiplication of solutions to (T1)--(T2) in the sense of Proposition~\ref{VxU}
  with the following folklore construction that can be called
fusion by analogy with a similar construction of solutions to the Yang-Baxter equation. 

\begin{proposition}\label{Fus}  
If $T \in M_{n^2}$ has rank $r$ and 
satisfies (T1)--(T2) for some  $Q>0$, then  
$\tilde{T} = T_{23} T_{12} T_{34} T_{23}  \in M_{n^4}$ has rank $\tilde{r}=r^2$ 
and satisfies (T1)--(T2) for $\tilde{Q} = Q^2$.
\end{proposition}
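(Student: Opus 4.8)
The plan is to carry out everything in terms of the operators $e_i := T_{i,i+1}$ on $(\mathbb{C}^n)^{\otimes m}$, which by~(\ref{tau}) together with (T1)--(T2) satisfy the full Temperley--Lieb relations $e_i^2=Q\,e_i$, $e_ie_j=e_je_i$ for $|i-j|\ge2$, $e_ie_{i\pm1}e_i=e_i$, and $e_i^*=e_i$. Thus $\tilde T=e_2e_1e_3e_2$, and viewing $\tilde T$ as an operator on $(\mathbb{C}^{n^2})^{\otimes3}=(\mathbb{C}^n)^{\otimes6}$ the two operators entering (T2) are $\tilde T_{12}=e_2e_1e_3e_2$ (on strands $1,\dots,4$) and $\tilde T_{23}=e_4e_3e_5e_4$ (on strands $3,\dots,6$). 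The single identity doing most of the work is
\[
 e_1e_3e_2e_1e_3=Q\,e_1e_3,
\]
immediate from $[e_1,e_3]=0$ and $e_3e_2e_3=e_3$.

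Relation (T1) is quick. Hermiticity follows from $\tilde T^*=e_2^*e_3^*e_1^*e_2^*=e_2e_3e_1e_2=e_2e_1e_3e_2=\tilde T$, using $[e_1,e_3]=0$. For the quadratic relation,
\[
 \tilde T^2=e_2e_1e_3\,e_2^2\,e_1e_3e_2=Q\,e_2(e_1e_3e_2e_1e_3)e_2=Q^2e_2e_1e_3e_2=Q^2\tilde T,
\]
so $\tilde T/Q^2$ is an orthogonal projection and $\tilde Q=Q^2$.

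For the rank I would bracket $\mathrm{rank}(\tilde T)=\dim\mathrm{Im}(\tilde T)$ from both sides. Since $e_1e_3=T_{12}T_{34}=Q^2\,(P_\CT\otimes P_\CT)$ is $Q^2$ times the rank-$r^2$ projection onto $\CT\otimes\CT$ (on strand blocks $(1,2)$ and $(3,4)$), the factorisation $\tilde T=e_2(e_1e_3)e_2$ gives $\mathrm{rank}(\tilde T)\le\mathrm{rank}(e_1e_3)=r^2$. Conversely, the same workhorse yields
\[
 e_1e_3\,\tilde T\,e_1e_3=(e_1e_3e_2e_1e_3)\,e_2e_1e_3=Q\,(e_1e_3e_2e_1e_3)=Q^2e_1e_3,
\]
an operator of rank $r^2$; as it is a two-sided product through $\tilde T$, $\mathrm{rank}(\tilde T)\ge r^2$. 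Hence $\tilde r=r^2$.

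The crux is (T2). I would first collapse the bridging block: since $e_2$ commutes with $e_4,e_5$ and $e_2e_3e_2=e_2$, $e_4e_5e_4=e_4$, one gets $e_2\,\tilde T_{23}\,e_2=e_2e_4$, whence $\tilde T_{12}\tilde T_{23}\tilde T_{12}=e_2e_1e_3(e_2\tilde T_{23}e_2)e_1e_3e_2=e_2e_1e_3e_2e_4e_1e_3e_2$. The main obstacle is that the surviving $e_4$ looks trapped between the two $e_3$'s; it is in fact removable after rearrangement, via
\[
 e_2e_1e_3e_2e_4e_1e_3e_2=e_2e_1e_3e_2e_1\,e_4e_3e_2=e_2e_3e_1\,e_4e_3e_2=e_2e_3e_4e_3\,e_1e_2=e_2e_3e_1e_2=e_2e_1e_3e_2,
\]
using $[e_1,e_4]=0$, then $e_2e_1e_3e_2e_1=e_2e_3e_1$, then $[e_1,e_3]=[e_1,e_4]=0$ to move $e_1$ right, then $e_3e_4e_3=e_3$, and finally $[e_1,e_3]=0$. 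This proves $\tilde T_{12}\tilde T_{23}\tilde T_{12}=\tilde T_{12}$. The companion relation $\tilde T_{23}\tilde T_{12}\tilde T_{23}=\tilde T_{23}$ then needs no further computation: reversing the six tensor factors implements the automorphism $e_i\mapsto e_{6-i}$, which fixes the Temperley--Lieb relations and exchanges $\tilde T_{12}$ and $\tilde T_{23}$, so it carries the first relation into the second.
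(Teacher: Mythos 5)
Your verifications of (T1), of the first relation in (T2), and of the rank are correct. For (T1) and the first (T2) relation your route is essentially the paper's: a direct computation with the Temperley--Lieb relations (the paper shuffles the twelve-factor product $T_{23}T_{12}T_{34}T_{23}\,T_{45}T_{34}T_{56}T_{45}\,T_{23}T_{12}T_{34}T_{23}$ down to $T_{23}T_{12}T_{34}T_{23}$ in one chain; your preliminary collapse $e_2\tilde T_{23}e_2=e_2e_4$ is a tidier organization of the same manipulation). The rank argument, however, is genuinely different. The paper uses that $\tilde T/Q^2$ is an orthogonal projection, so $Q^2\tilde r=\tr\tilde T$, and evaluates the trace by cyclicity together with (T1)--(T2): $\tr\tilde T=Q\tr(T_{12}T_{34}T_{23})=\tr(T_{12}T_{34}T_{23}T_{12})=\tr(T_{34}T_{12})=(\tr T)^2=Q^2r^2$. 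You instead sandwich the rank: $\tilde T=e_2(e_1e_3)e_2$ gives $\mathrm{rank}(\tilde T)\le\mathrm{rank}(T\otimes T)=r^2$, while $e_1e_3\,\tilde T\,e_1e_3=Q^2e_1e_3$ gives the reverse inequality. Both are sound; yours avoids traces altogether and runs entirely on your workhorse identity, which is a pleasant economy.

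The one genuine flaw is your justification of the companion relation $\tilde T_{23}\tilde T_{12}\tilde T_{23}=\tilde T_{23}$. Conjugation by the unitary $\sigma$ that reverses the six tensor factors does \emph{not} implement $e_i\mapsto e_{6-i}$: writing $P$ for the flip on ${\mathbb C}^n\otimes{\mathbb C}^n$, one has $\sigma\,T_{i,i+1}\,\sigma^{-1}=(PTP)_{6-i,7-i}$, because the reversal also swaps the two factors inside each block. Since $PTP\neq T$ in general (already for the rank-one solution of Example~\ref{stansol} with $z\neq\pm1$, the lines spanned by $V^{(n)}$ and by its transpose differ), $\sigma$ intertwines the representation built from $T$ with the one built from $PTP$, and the symmetry argument as you state it does not close. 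The conclusion is easily salvaged, and in a way your computation already contains: your derivation of $\tilde T_{12}\tilde T_{23}\tilde T_{12}=\tilde T_{12}$ uses only the abstract relations (\ref{TL1q})--(\ref{TL2}) of $TL_6(Q)$, so that identity holds in the abstract algebra, where the automorphism $\phi(\ST_k)=\ST_{6-k}$ (introduced in the paper after (\ref{JW2}) and used repeatedly in its Appendix) genuinely exists; applying $\phi$ to the abstract identity and then the representation (\ref{tau}) yields the second relation. Equivalently, rerun your chain with every index $i$ replaced by $6-i$: each relation you invoked has its mirror counterpart. The paper itself sidesteps the issue by stating that the second relation ``is checked analogously''.
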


 In Proposition~\ref{Fus}, the ranks of $T$ and $\tilde{T}$ are different unless $r=1$,
whereas Propositions~\ref{Tadd} and~\ref{VxU} allow us to build new solutions from
known solutions of the same rank.

\section{Some new explicit solutions}

\subsection{The case of $n=r \leq 4$ and generalized permutation matrices}\label{Nr4}

One can notice that the solutions for $n=r$ with $r=1,2,3$ 
(cf.\ Examples \ref{ExQsq2} and \ref{ExQsq3}) are given by generalized permutation matrices,
i.e., $V_k = D_k P_{\sigma_k}$, where $D_k$ is a non-degenerate diagonal
matrix and $P_{\sigma_k}$ is the permutation matrix corresponding to an
element $\sigma_k$ of the symmetric group $S_n$. 
(This is the general form of a solution for $r=1$, $n \geq 2$, cf.\ \cite{By2}. 
For some varying $Q$ solutions of this type for $r=2$ see also~\cite{By2}.)
 
Let us extend this list with a solution of the same type for $r=n=4$.

\begin{proposition}\label{Solnr4}
Let $V_k \in M_4$, $k=1,2,3,4$ be given by 
\begin{equation}\label{soln4r4} 
\begin{aligned}
 V_1 &=   
  \bigl( z_1  E^{(4)}_{12} + z_2  E^{(4)}_{23} 
  + z_3  E^{(4)}_{34} + z_4  E^{(4)}_{41}\bigr) , \quad 
  V_2 =   
  \bigl( z_1  E^{(4)}_{14} + z_2  E^{(4)}_{21} + z_3  E^{(4)}_{32} + z_4  E^{(4)}_{43}\bigr) , \\[0.5mm]
 V_3 &=   \bigl( \bar{z}_3 E^{(4)}_{12} + \bar{z}_4 E^{(4)}_{21} - 
  \bar{z}_1 E^{(4)}_{34} - \bar{z}_2 E^{(4)}_{43} \bigr) , \quad
 V_4 =   \bigl(  \bar{z}_3 E_{14} + \bar{z}_4 E_{23}
  - \bar{z}_1 E_{32} - \bar{z}_2 E_{14} \bigr)  ,
\end{aligned}  
\end{equation}
where $z_1, z_2, z_3, z_4 \in {\mathbb C}$ are such that
\begin{equation}\label{zz4r4} 
 |z_1|^2 + |z_2|^2 + |z_3|^2 + |z_4|^2 =1 , \qquad
(|z_1| + |z_3|)(|z_2| + |z_4|) \neq 0 .
\end{equation}
 Then
$T= Q_{z_1,z_2,z_3,z_4} P_{\CT} \in  M_{16}$,
where ${\CT} \sim \{ V_1, V_2, V_3, V_4\}$ and 
\begin{equation}\label{qvrn4}
Q_{z_1,z_2,z_3,z_4} =  \frac{1}{\sqrt{(|z_1|^2 + |z_3|^2)(|z_2|^2 + |z_4|^2)}}
\end{equation} 
is a solution to (T1)--(T2) of rank $r=4$ with $Q$ varying in the range 
$Q_{z_1,z_2,z_3,z_4} \in [2,\infty)$.
\end{proposition}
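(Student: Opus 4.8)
The plan is to invoke Proposition \ref{Wuni}: since the claimed solution is of the form $T = Q P_\CT$ with $\CT \sim \{V_1,V_2,V_3,V_4\}$, it suffices to verify two things, namely that $\{V_1,V_2,V_3,V_4\}$ is an orthonormal set in the sense of (\ref{vv}), and that $Q W_\CT$ is unitary, where $W_\CT = \sum_{s,m=1}^4 E^{(4)}_{sm} \otimes V_m \bar V_s$ is the partitioned matrix (\ref{WV}) and $Q = Q_{z_1,z_2,z_3,z_4}$ is given by (\ref{qvrn4}).

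First I would check the orthonormality condition $\tr(V_s^* V_m) = \delta_{sm}$. Each $V_k$ is a generalized permutation matrix supported on four off-diagonal matrix units, so $V_s^* V_m$ is again supported on matrix units and $\tr(V_s^* V_m)$ collects only the diagonal contributions. For $s=m$ the trace is $|z_1|^2+|z_2|^2+|z_3|^2+|z_4|^2$, which equals $1$ by the first constraint in (\ref{zz4r4}); the off-diagonal cases $s \neq m$ should vanish because the supports of the relevant permutations do not overlap on the diagonal. The sign pattern in $V_3,V_4$ (the minus signs on the $\bar z_1,\bar z_2$ terms) is presumably engineered precisely so that these cross terms cancel, so this is a finite, if slightly tedious, bookkeeping check.

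Next I would compute each block $V_m \bar V_s \in M_4$ and assemble $W_\CT \in M_{16}$. Because each $V_k$ is a monomial (generalized permutation) matrix, every product $V_m \bar V_s$ is again monomial, so the whole computation reduces to tracking which permutation each product realizes and what scalar weight it carries. The key identity to establish is $(Q W_\CT)^* (Q W_\CT) = I_{16}$, i.e.\ $Q^2 \, W_\CT^* W_\CT = I_{16}$. Here $W_\CT^* W_\CT = \sum_{s,m,s',m'} E^{(4)}_{sm}{}^* E^{(4)}_{s'm'} \otimes (V_m \bar V_s)^* (V_{m'} \bar V_{s'})$ collapses via $E^{(4)}_{ms} E^{(4)}_{s'm'} = \delta_{s s'} E^{(4)}_{m m'}$, and I would show the resulting $M_4$-valued blocks combine to $Q^{-2} I_4$ on the diagonal and vanish off the diagonal. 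The factorized normalization $Q^{-2} = (|z_1|^2+|z_3|^2)(|z_2|^2+|z_4|^2)$ strongly suggests that $W_\CT$ itself block-factorizes, so I expect the $16\times 16$ verification to reduce to a $2\times 2$ (or $4\times 4$) check by exploiting this product structure.

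The main obstacle will be organizing the monomial bookkeeping for $W_\CT^* W_\CT$ without error: there are sixteen blocks, each a product of two monomial matrices with complex phases and signs, and the off-diagonal cancellations depend delicately on the signs chosen in (\ref{zz4r4}). Once unitarity of $Q W_\CT$ is confirmed, Proposition \ref{Wuni} immediately gives that $T = Q P_\CT$ solves (T1)--(T2), and $r = 4$ follows since the $V_k$ are orthonormal hence linearly independent. Finally, the range statement $Q_{z_1,z_2,z_3,z_4} \in [2,\infty)$ follows from the AM--GM inequality: writing $a = |z_1|^2+|z_3|^2$ and $b=|z_2|^2+|z_4|^2$ with $a+b=1$, we have $ab \le (a+b)^2/4 = 1/4$, so $Q = (ab)^{-1/2} \ge 2$, with every value in $[2,\infty)$ attained as $ab$ ranges over $(0,1/4]$ (the second constraint in (\ref{zz4r4}) guarantees $ab \neq 0$).
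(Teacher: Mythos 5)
Your proposal is correct and follows essentially the same route as the paper: verify the orthonormality condition (\ref{vv}), verify that $Q\,W_\CT$ is unitary by exploiting that the $V_k$ are generalized permutation matrices (the paper organizes this via the commutation rule $V_k \Lambda = \Lambda' V_k$ for diagonal $\Lambda$, which is just a cleaner bookkeeping device for the same monomial computation you describe), invoke Proposition~\ref{Wuni}, and obtain the range $[2,\infty)$ from $Q = \zeta + 1/\zeta \geq 2$, equivalent to your AM--GM argument. The only cosmetic difference is that you check $W_\CT^* W_\CT = Q^{-2} I_{16}$ while the paper's criterion (\ref{vvvvI}) amounts to $W_\CT W_\CT^* = Q^{-2} I_{16}$; these are equivalent for square matrices.
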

 
Let us remark that, with the help of either Proposition~\ref{Tadd} (by taking $k$ copies
of (\ref{soln4r4})) or Proposition~\ref{VxU} (multiplying (\ref{soln4r4})  
with a rank one solution, say $V^{(k)}$), one can build from   
(\ref{soln4r4}) a solution for $r=4$ and $n=4k$, $k \in \mathbb N$, where $V_k$ are  
given by generalized permutation matrices as well. The corresponding 
value of $Q$ will be in the range $[2k, \infty)$.

\subsection{The case of $n=r+1$ with arbitrary $r$}

Let us give a solution for $n=r+1$, $r\in \mathbb N$,
which is not obtained from some smaller size solutions 
with the help of Proposition~\ref{Tadd} or Proposition~\ref{VxU}. This solution
is remarkably sparse -- every matrix $V_k$ has only two non-zero entries irrespective
of its size~$n$.

\begin{proposition}\label{Snrmo} 
For $n \geq 2$ and $z_1,z_2 \in {\mathbb C} \setminus \{0\}$, 
let $V_k \in M_n$, $k=1,\ldots,n-1$ be given by 
\begin{equation}\label{vtete1} 
 V_k = \frac{1}{\sqrt{ |z_1|^2 + |z_2|^2 }} \, 
  \bigl( z_1  E^{(n)}_{1,k+1} + z_2  E^{(n)}_{k+1,1} \bigr) .
\end{equation}  
Then 
$T= Q_{z_1,z_2} P_{\CT} \in  M_{n^2}$,
where ${\CT} \sim \{ V_1, \ldots, V_{n-1}\}$ and 
\begin{equation}\label{qvtete}
Q_{z_1,z_2} = |z_1|/|z_2| + |z_2|/|z_1|
\end{equation} 
is a solution to (T1)--(T2) of rank $r=n-1$ with $Q$ varying in the range 
$Q_{z_1,z_2} \in [2,\infty)$.
\end{proposition}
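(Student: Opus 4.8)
The plan is to invoke Proposition~\ref{Wuni}, so that everything reduces to two checks: that the matrices $V_1,\ldots,V_{n-1}$ satisfy the orthonormality condition (\ref{vv}) (which simultaneously certifies that $\CT$ has dimension $r=n-1$, hence $T=QP_\CT$ has rank $n-1$), and that $Q_{z_1,z_2}\,W_\CT$ is unitary. Throughout I abbreviate $N=|z_1|^2+|z_2|^2$, so that $V_k=N^{-1/2}\bigl(z_1 E^{(n)}_{1,k+1}+z_2 E^{(n)}_{k+1,1}\bigr)$, and I repeatedly use $E_{ab}E_{cd}=\delta_{bc}E_{ad}$.

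For (\ref{vv}) I would compute $V_s^*V_m$ directly. Two of the four matrix-unit products vanish because their inner indices cannot match (the indices $s+1,m+1$ are all $\ge 2$), leaving $V_s^*V_m=N^{-1}\bigl(|z_1|^2 E^{(n)}_{s+1,m+1}+|z_2|^2\delta_{sm}E^{(n)}_{1,1}\bigr)$. Taking the trace gives $\tr(V_s^*V_m)=N^{-1}(|z_1|^2+|z_2|^2)\delta_{sm}=\delta_{sm}$, as required.

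The core step is to identify the block structure of $W_\CT$. An analogous computation of $V_m\bar V_s$ (again only two products survive) gives $V_m\bar V_s=N^{-1}\bigl(z_1\bar z_2\,\delta_{ms}E^{(n)}_{1,1}+z_2\bar z_1\,E^{(n)}_{m+1,s+1}\bigr)$. Writing $\alpha=z_1\bar z_2/N$, whose conjugate is $\beta=z_2\bar z_1/N$, substitution into (\ref{WV}) yields
\[
 W_\CT=\alpha\, I_{n-1}\otimes E^{(n)}_{1,1}
 +\beta\sum_{s,m=1}^{n-1} E^{(n-1)}_{sm}\otimes E^{(n)}_{m+1,s+1}.
\]
I would then read off the action on $\mathbb{C}^{n-1}\otimes\mathbb{C}^n$ in the basis $f_s\otimes e_a$. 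The first term fixes the subspace $\mathcal{W}=\mathrm{span}\{f_s\otimes e_1\}$ and acts there as $\alpha I$, while annihilating its complement; the second term annihilates $\mathcal{W}$ and maps $\mathcal{U}=\mathrm{span}\{f_j\otimes e_{c+1}:1\le j,c\le n-1\}$ to itself by $f_j\otimes e_{c+1}\mapsto \beta\, f_c\otimes e_{j+1}$. The essential observation is that, under the identification $e_{c+1}\leftrightarrow g_c$ of $\mathcal{U}$ with $\mathbb{C}^{n-1}\otimes\mathbb{C}^{n-1}$, this second block is exactly $\beta$ times the flip operator $f_j\otimes g_c\mapsto f_c\otimes g_j$, which I denote $F$ and which is unitary. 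Thus $W_\CT=(\alpha I_{\mathcal W})\oplus(\beta F)$ is block diagonal.

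Given this decomposition, Proposition~\ref{Wuni} collapses to a single scalar equation. Since $F$ is unitary and $|\alpha|=|\beta|=|z_1||z_2|/N$, both blocks of $Q^2 W_\CT^*W_\CT$ equal $Q^2|\alpha|^2 I$, so $Q W_\CT$ is unitary precisely when $Q=1/|\alpha|=N/(|z_1||z_2|)=|z_1|/|z_2|+|z_2|/|z_1|=Q_{z_1,z_2}$, which is the claimed value. For the range, putting $t=|z_1|/|z_2|\in(0,\infty)$ the function $t\mapsto t+t^{-1}$ attains every value in $[2,\infty)$, with minimum $2$ at $t=1$ by AM--GM. The step needing the most care is recognizing the large block as a scalar multiple of the flip operator; once the index shift $b=c+1$ is made explicit its unitarity is automatic, and the rest is routine matrix-unit bookkeeping.
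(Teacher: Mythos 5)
Your proof is correct and takes essentially the same route as the paper: both reduce the claim to Proposition~\ref{Wuni}, check (\ref{vv}) by matrix-unit algebra, and arrive at the identical decomposition $W_\CT = \alpha\, I_{n-1}\otimes E^{(n)}_{11} + \beta\sum_{s,m} E^{(n-1)}_{sm}\otimes E^{(n)}_{m+1,s+1}$ with $|\alpha|=|\beta|=|z_1||z_2|/(|z_1|^2+|z_2|^2)$. The only difference is in the last step: the paper verifies unitarity by direct algebra (noting $W_1 W_2 = W_2 W_1 = 0$ and computing $W_\CT W_\CT^*$ explicitly), while you recognize the second summand as $\beta$ times a flip operator on an invariant subspace --- the same computation in slightly more geometric form.
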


Let us remark that, for  $n=2$,  equation (\ref{vtete1}) yields the well-known
solution of rank one (cf.\ Example~\ref{stansol}) and, for  $n=3$, 
it recovers a particular case of a more general 
solution of rank two (cf.\ Theorem~3 in~\cite{By2}).

\subsection{The case of $r \leq 4$ with arbitrary $n$}

For $n <r \leq 4$, there is only one (trivial) solution to (T1)--(T2) (cf.\ the part a) of 
Proposition~\ref{ColQR}).
For $n=r  \leq 4$, examples of solutions were given in Example~\ref{ExQsq2},
Example~\ref{ExQsq3}, and Proposition~\ref{Solnr4}. 
For the remaining cases, $r \leq 4$, $n > r$, we have the following statement
(where the case of $r=1$ is omitted because it is covered by Example~\ref{stansol}).

\begin{theorem}\label{ListSmr}
For $r=2,3,4$ and $n > r$, 
a solution $T \in M_{n^2}$ of rank $r$  to (T1)--(T2)  exists 
for every $Q$ in the range $Q \in [Q_{r,n} , \infty)$, where  \\[0.5mm] 
a) $Q_{2,n}=  2(k-m)+m \sqrt{2} $ \  if $n$ is given by $n=3k-m$, 
$k \in \mathbb N$, $m=0,1,2$;\\[0.5mm] 
b) $Q_{3,n} = 2(k-m)+m \sqrt{3}$ \  if $n\neq 5$ and $n$ is given by 
$n=4k-m$, $k \in \mathbb N$, $m=0,1,2,3$;\\[0.5mm] 
c) $Q_{4,n} = 2k$ \  if $n$ is given by $n=5k-m$, $k \in \mathbb N$, $m=0,1,2$;\\[0.5mm] 
d) $Q_{4,n} = 2k-1$ \  if $n$ is given by $n=5k-m$, $k \in \mathbb N$, $m=3,4$.
\end{theorem}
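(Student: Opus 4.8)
The plan is to reduce the theorem to a short list of explicit ``building block'' solutions and to generate all the required pairs $(n,Q)$ by iterating the two combination operations of Section~3. The direct sum of Proposition~\ref{Tadd} glues rank-$r$ solutions of sizes $n_1,n_2$ into a rank-$r$ solution of size $n_1+n_2$ with $\tilde Q=Q_1+Q_2$, while the product of Proposition~\ref{VxU} multiplies a rank-$r$ solution of size $m$ by the rank-one solution $V^{(j)}$ of Example~\ref{stansol} to give a rank-$r$ solution of size $mj$ with $\tilde Q=Q_j(z)\,Q\in[jQ,\infty)$. The size-$3$ block for $r=2$ (Proposition~\ref{Snrmo}), the size-$4$ block for $r=3$ (Proposition~\ref{Snrmo}), and the size-$4$ and size-$5$ blocks for $r=4$ (Propositions~\ref{Solnr4} and~\ref{Snrmo}) all carry a free parameter with $Q$ ranging over $[2,\infty)$. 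Hence it is enough to realise the claimed minimum $Q_{r,n}$ by a direct sum containing at least one such variable block; the rest of the half-line $[Q_{r,n},\infty)$ is then swept out by raising the parameter of that single block while the others stay at their minima.

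First I would treat $r=2,3$, which are structurally identical. For $r=2$, writing $n=3k-m$ with $m\in\{0,1,2\}$, I take $k-m$ copies of the variable size-$3$ block and $m$ copies of the fixed size-$2$ block of Example~\ref{ExQsq2} (which has $Q=\sqrt2$); the sizes add to $3(k-m)+2m=n$ and the minimal total is $2(k-m)+\sqrt2\,m=Q_{2,n}$. This is legitimate precisely when $k>m$, so that a variable block is present; the only $n>r$ with $k=m$ is $n=4$, which I instead obtain as the product of the size-$2$ block with $V^{(2)}$, giving $Q\in[2\sqrt2,\infty)=[Q_{2,4},\infty)$. The case $r=3$ is identical with $n=4k-m$, $m\in\{0,1,2,3\}$, using $k-m$ variable size-$4$ blocks and $m$ fixed size-$3$ blocks (Example~\ref{ExQsq3}, $Q=\sqrt3$); here the degenerate values $k=m$ are $n=6,9$, which I produce as the products of the size-$3$ block with $V^{(2)}$ and $V^{(3)}$, landing on $[2\sqrt3,\infty)$ and $[3\sqrt3,\infty)$. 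The excluded value $n=5$ is exactly the one for which $5$ is neither of the form $4a+3b$ with $a,b\ge0$ nor a product $mj$ with $m\ge3$, $j\ge2$, so none of the available constructions can reach it — which is consistent with its omission from part~b).

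For $r=4$ the available variable blocks have sizes $4$ and $5$, both with $Q\in[2,\infty)$, so a direct sum of $a$ and $b$ of them has minimal total $2(a+b)$, which is always \emph{even}. Taking $k$ blocks in total, $m$ of size $4$ and $k-m$ of size $5$, covers $n=5k-m$ for $m\in\{0,1,2\}$ (i.e.\ $n\equiv0,3,4\pmod5$) with $Q_{\min}=2k$, matching parts~c). The genuinely new input is the odd minimum $Q_{4,n}=2k-1$ demanded for $m\in\{3,4\}$ (i.e.\ $n\equiv1,2\pmod5$): no direct sum of size-$4$ and size-$5$ blocks can produce an odd minimum, so I need two further building blocks, a size-$6$ and a size-$7$ rank-$4$ solution each with $Q\in[3,\infty)$. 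Granting these, every remaining $n$ is tiled as one such block together with $k-2$ size-$5$ blocks, giving $Q_{\min}=3+2(k-2)=2k-1$ as required. The hard part is the \emph{existence} of these two blocks: they cannot arise from Proposition~\ref{VxU}, which forces $Q_{\min}=jQ_0\ge2j\ge4$; nor from Proposition~\ref{Tadd}, since the smallest rank-$4$ summand has size $4$ and sizes $6,7$ are unreachable; nor from the fusion of Proposition~\ref{Fus}, which only yields square sizes. I therefore expect to write them down by hand — plausibly as sparse, generalized-permutation-type matrices $V_1,\dots,V_4\in M_6$ and $M_7$ in the spirit of Propositions~\ref{Solnr4} and~\ref{Snrmo} — and to verify directly, via the unitarity criterion of Proposition~\ref{Wuni}, that $Q\,W_{\CT}$ is unitary for all $Q\in[3,\infty)$. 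Producing and checking these two explicit families is the main obstacle; once they are in hand, the bookkeeping above assembles solutions for all $r=2,3,4$ and $n>r$ (save $n=5$, $r=3$) over the stated ranges of~$Q$.
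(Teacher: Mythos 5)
Your treatment of parts a), b) and c) is correct and coincides with the paper's proof: direct sums (Proposition~\ref{Tadd}) of the variable blocks of Proposition~\ref{Snrmo} with the fixed blocks of Examples~\ref{ExQsq2} and~\ref{ExQsq3}, the degenerate cases $n=4$ ($r=2$) and $n=6,9$ ($r=3$) handled by products with rank-one solutions via Proposition~\ref{VxU}, and the same explanation for the exclusion of $n=5$, $r=3$. However, part d) of your argument has a genuine gap: you correctly diagnose that sums of size-$4$ and size-$5$ blocks with $Q\in[2,\infty)$ can only give even minima, but you then postulate two new hand-built blocks (sizes $6$ and $7$, rank $4$, $Q\in[3,\infty)$) whose existence you never establish, explicitly calling their construction ``the main obstacle.'' A proof that ends at its main obstacle is not a proof.

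Moreover, the step by which you argue these blocks \emph{cannot} come from Proposition~\ref{Tadd} --- ``the smallest rank-$4$ summand has size $4$'' --- is false, and this is exactly the idea you are missing. The trivial solution $T=I_{n^2}$ of Example~\ref{trivsol} with $n=2$, i.e.\ $T=I_4\in M_{2^2}$, is a rank-$4$ solution of size $2$ with $Q=1$. Using it as an extra fixed summand, the paper obtains precisely the odd minima: for $n=5k-3$, take one copy of $I_4$ and $(k-1)$ copies of the size-$5$ block $T_{(4)}$ of Proposition~\ref{Snrmo}, giving $Q\geq 1+2(k-1)=2k-1$; for $n=5k-4$, take one copy of the size-$4$ solution of Proposition~\ref{Solnr4}, one copy of $I_4$, and $(k-2)$ copies of $T_{(4)}$, giving $Q\geq 2+1+2(k-2)=2k-1$. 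In particular your hypothetical size-$6$ and size-$7$ blocks with $Q\in[3,\infty)$ are the special cases $k=2$ of these two sums, so no new explicit family needs to be constructed or verified via Proposition~\ref{Wuni}. Once you admit the size-$2$ trivial solution into your list of building blocks, your own bookkeeping closes the gap and reproduces the paper's argument.
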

 
In the proof given in the Appendix, we will build for these cases some explicit solutions 
to (T1)--(T2) from suitable small size solutions.

\section{Some remarks}

\subsection{On the case of $Q=2$}\label{QQ2}

A remarkable necessary and sufficient condition of existence of
a solution to (T1)--(T2) for $Q=2$ was found recently in~\cite{LPW}.

\begin{proposition}[\cite{LPW}, Proposition 6.3]\label{SolQ2}
For $Q=2$, a solution $T \in M_{n^2}$ of rank $r$  to (T1)--(T2)  exists
if and only if $\sqrt{n^2 -4r}$ is an integer.
\end{proposition}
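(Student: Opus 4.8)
The plan is to organize both implications around the single $n\times n$ matrix
\[ M := \mathrm{tr}_2\, T , \]
the partial trace of $T$ over its second tensor factor. If $T=Q\,P_\CT$ with $\CT\sim\{V_1,\dots,V_r\}$ as in Section~3.1, a short computation gives $M=Q\sum_{s=1}^{r}V_s V_s^{*}$; in particular $M$ is Hermitian and positive semidefinite, and $\tr M=\tr T=Q\,r$. The heart of the matter is the quadratic identity, which holds precisely when $Q=2$:
\[ M^2-n\,M+r\,I_n=0 . \]

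Granting this identity, necessity follows at once. Being Hermitian, $M$ has real eigenvalues, and each is a root of $x^2-nx+r$; reality of these roots already forces $n^2\ge 4r$ (the case $Q=2$ of Theorem~\ref{QRan}), and the roots are $t_\pm=(n\pm s)/2$ with $s=\sqrt{n^2-4r}$. Writing $a$ and $b$ for the multiplicities of $t_+$ and $t_-$, we have $a,b\in\mathbb{Z}_{\ge 0}$ and $a+b=n$. Comparing the two expressions
\[ \tr M=a\,t_++b\,t_-=\tfrac12\bigl(n^2+s(a-b)\bigr),\qquad \tr M=Q\,r=2r=\tfrac12\bigl(n^2-s^2\bigr), \]
gives $s\,(a-b+s)=0$, hence $s=|a-b|$, an integer (and if $s=0$ there is nothing to prove).

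To establish the quadratic identity I would start from the partial-trace consequences of~(T2). Tracing $T_{12}T_{23}T_{12}=T_{12}$ over the third factor yields, for every $Q$, the identity $T\,(I_n\otimes M)\,T=n\,T$ on $\mathbb{C}^n\otimes\mathbb{C}^n$, where $M$ sits in the second factor; equivalently $P_\CT(I_n\otimes M)P_\CT=(n/Q)P_\CT$, so the compression of $I_n\otimes M$ to $\CT$ is a scalar. This by itself does not pin down the spectrum of $M$ (for generic $Q$ it can have $n$ distinct eigenvalues, as in Example~\ref{stansol}). The extra input at $Q=2$ is that the constant R-matrix $R=I_{n^2}-T$ is then a self-adjoint involution, $R^2=I_{n^2}$, so that $T$ defines a representation of the symmetric group in addition to that of $TL_N(2)$. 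Combining the displayed identity with its mirror $T\,(M'\otimes I_n)\,T=n\,T$ coming from $T_{23}T_{12}T_{23}=T_{23}$ (with $M'=\mathrm{tr}_1 T$), a further partial trace, and the relation $R^2=I_{n^2}$ to cancel the cross terms, should collapse the scalar-compression identity into the full quadratic relation for $M$. Carrying out this elimination is the main obstacle of the proof, and it is exactly where the value $Q=2$ is used.

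For the converse, assume $s=\sqrt{n^2-4r}\in\mathbb{Z}$. Then $a=(n+s)/2$ and $b=(n-s)/2$ are non-negative integers with $a+b=n$ and $ab=r$, and $a,b\ge 1$ since $r\ge 1$. Partition the standard basis of $\mathbb{C}^n$ into $A\sqcup B$ with $|A|=a$, $|B|=b$, and for $(\alpha,\beta)\in A\times B$ put
\[ V_{\alpha\beta}=\tfrac{1}{\sqrt2}\bigl(E^{(n)}_{\alpha\beta}+E^{(n)}_{\beta\alpha}\bigr). \]
These $r=ab$ matrices are orthonormal in the sense of~(\ref{vv}), and a direct evaluation of the criterion in Proposition~\ref{Wuni} shows that $2W_\CT$ is unitary: its diagonal blocks sum to $P_A+P_B=I_n$, while its off-diagonal blocks vanish because the index sets are disjoint. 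Hence $T=2P_\CT$ is a rank-$r$ solution of~(T1)--(T2) with $Q=2$; it specializes to Proposition~\ref{Snrmo} at the balanced point when $b=1$, and it realizes every admissible pair $(a,b)$. As a consistency check one computes $M=\mathrm{tr}_2 T=b\,P_A+a\,P_B$, which indeed satisfies $M^2-nM+rI_n=0$, closing the loop with the necessity argument.
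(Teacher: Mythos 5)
The paper does not actually prove Proposition~\ref{SolQ2}: it is imported from \cite{LPW}, and in-house the paper only proves the arithmetic reformulation (Proposition~\ref{SolQ2b}) and exhibits explicit $Q=2$ examples (Example~\ref{ExQ2}), which cover all admissible $(n,r)$ only for $r\leq 5$. Measured against that, your \emph{sufficiency} half is correct, self-contained, and in fact does more than the paper records: writing $n=a+b$, $r=ab$ with $a=(n+s)/2$, $b=(n-s)/2$ (integers, since $n^2-s^2=4r$ forces $n\equiv s \pmod 2$ --- a parity point you should state), the symmetrized matrix units $V_{\alpha\beta}=\frac{1}{\sqrt2}\bigl(E^{(n)}_{\alpha\beta}+E^{(n)}_{\beta\alpha}\bigr)$ do satisfy (\ref{vv}), and a direct check of $Q^2\sum_{s} V_s\bar V_l V_p^t V_s^*=\delta_{lp}I_n$ at $Q=2$ (cf.\ (\ref{vvvvI}); all cross terms vanish because $A$ and $B$ are disjoint) shows $2W_\CT$ is unitary, so Proposition~\ref{Wuni} applies. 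This realizes every admissible pair and generalizes (\ref{vtete1}) at $|z_1|=|z_2|$, which is the case $\min(a,b)=1$. Your phrase about ``diagonal blocks'' of $W_\CT$ is looser than the actual block computation, but the computation does go through.

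The \emph{necessity} half has a genuine gap, which you flag yourself: everything hinges on the quadratic identity $M^2-nM+rI_n=0$ at $Q=2$, and you never establish it (``should collapse'', ``the main obstacle''). The trace argument you derive from it is fine, but the identity \emph{is} the theorem. Moreover, the ingredients you name cannot suffice in the form described. The two partial-trace identities, $T(I_n\otimes M)T=nT$ and $Q\sum_s V_s M^t V_s^*=rI_n$, hold uniformly for every $Q$, and Example~\ref{stansol} with $n\geq 3$ and $|z|\neq 1$ gives $M$ with $n$ distinct eigenvalues; hence no manipulation of $Q$-uniform identities can produce a degree-two minimal polynomial, and $Q=2$ must enter essentially. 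But your proposed mechanism, $R^2=I_{n^2}$, is literally equivalent to $T^2=2T$, i.e.\ to (T1) at $Q=2$; and the braid relation for $R=I_{n^2}-T$, once expanded using $T^2=2T$ and (T2), reduces identically to the Temperley--Lieb relations, so partial-tracing it yields no constraint you do not already have. What actually rigidifies the spectrum of $M$ (equivalently of $\mathrm{tr}_2 R$) is the structure theory of involutive unitary R-matrices in \cite{LPW}, via the characters of the associated representations of the infinite symmetric group --- precisely the machinery the paper avoids reproducing by citing Proposition~6.3. As it stands, your proposal proves the ``if'' direction but not the ``only if'' direction.
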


Let us remark that the criterion given in Proposition~\ref{SolQ2} can be reformulated as follows.

\begin{proposition}\label{SolQ2b}
i) For $Q=2$, a solution $T \in M_{n^2}$ of rank $r$  to (T1)--(T2)   exists
if and only if $r$ has a divisor $m$ such that $n=m+ r/m$. \\
ii) In particular, if $r$ is a prime number, a solution $T \in M_{n^2}$ of rank $r$  to (T1)--(T2)  
for $Q=2$ exists if and only if $n=r+1$.
\end{proposition}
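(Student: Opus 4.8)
The plan is to derive Proposition~\ref{SolQ2b} directly from Proposition~\ref{SolQ2}, treating the two parts as a purely number-theoretic reformulation of the criterion ``$\sqrt{n^2-4r}$ is an integer.'' First I would observe that $\sqrt{n^2-4r}$ being a (nonnegative) integer is equivalent to the existence of an integer $k\geq 0$ with $n^2-4r=k^2$, i.e.\ $n^2-k^2=4r$. Since $n$ and $k$ must have the same parity (as $n^2-k^2$ is even), I would write $n=m+\ell$ and $k=|\ell-m|$ for suitable factorization, or more directly factor $4r=(n-k)(n+k)$ and set $m=(n-k)/2$. Then $n-k=2m$ and $n+k=2r/m$ force $mn = \tfrac12(n-k)\cdot n$... so the cleaner route is: $(n-k)(n+k)=4r$ with $n\equiv k\pmod 2$, so setting $a=(n-k)/2$ and $b=(n+k)/2$ gives positive integers with $ab=r$, $a+b=n$, and $a\le b$. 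Conversely any divisor $m\mid r$ with $n=m+r/m$ yields $k=|r/m-m|$, a nonnegative integer, and then $n^2-4r=(m+r/m)^2-4r=(m-r/m)^2=k^2$ is a perfect square.

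For part~i), the heart of the argument is this equivalence between ``$n^2-4r$ is a perfect square'' and ``$r$ has a divisor $m$ with $n=m+r/m$.'' I would present it as a short two-way implication. In the forward direction, write $n^2-4r=k^2$ with $k\ge 0$ integer, set $m=(n-k)/2$ and note $m$ and $r/m$ are the two roots of the quadratic $x^2-nx+r=0$ (since their sum is $n$ and their product is $r$ by Vieta applied to $k=\sqrt{n^2-4r}$), so $m$ is an integer dividing $r$ with $r/m=(n+k)/2$ also an integer and $m+r/m=n$. In the backward direction, given $m\mid r$ with $n=m+r/m$, the displayed identity $(m+r/m)^2-4r=(m-r/m)^2$ shows $n^2-4r$ is the square of the integer $|m-r/m|$. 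Combining with Proposition~\ref{SolQ2} gives part~i). The mild technical point to check is that $m$ and $r/m$ are genuinely integers, not merely rationals summing to $n$; this is why I invoke Vieta's formulas, which guarantee both roots are integers once their sum $n$ and product $r$ are integers and their difference $k$ is an integer.

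For part~ii), I would specialize part~i) to a prime $r=p$. The only factorizations $p=m\cdot(p/m)$ correspond to divisors $m\in\{1,p\}$, giving the unordered pair $\{1,p\}$, so the only value of $n$ of the form $m+r/m$ with $m\mid p$ is $n=1+p$. Hence a solution exists if and only if $n=p+1=r+1$. I would state this as an immediate corollary of part~i), noting that distinct divisors $m$ and $p/m$ produce the same sum, so there is a unique admissible $n$.

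I do not anticipate a serious obstacle here, as the statement is essentially an elementary repackaging of Proposition~\ref{SolQ2}; the only point demanding care is the parity/integrality argument showing that $(n-k)/2$ is an integer dividing $r$, which I would handle cleanly via the substitution $k=\sqrt{n^2-4r}$ and Vieta's formulas for $x^2-nx+r$. I would make sure to treat the boundary case $k=0$ (i.e.\ $n^2=4r$, giving $m=r/m=n/2$) so that the divisor $m=n/2$ with $n=2m$ is correctly included.
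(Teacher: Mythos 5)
Your proposal is correct and follows essentially the same route as the paper: both directions reduce to the identity $(m+r/m)^2-4r=(m-r/m)^2$ together with the parity observation that $n$ and $\sqrt{n^2-4r}$ are congruent mod $2$, so that $m=\tfrac12\bigl(n\pm\sqrt{n^2-4r}\bigr)$ is an integer with $m(n-m)=r$ (the paper takes the larger root where you take the smaller, which is immaterial). Part ii) is handled identically in both, by noting the only divisors of a prime $r$ are $1$ and $r$, each giving $n=r+1$.
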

 
Solutions to  (T1)--(T2) constructed in Sections 3 and 4 allow us to provide
the following explicit examples to Proposition~\ref{SolQ2}.
 
\begin{ex}\label{ExQ2} 
Setting $|z_1|=|z_2|$ in (\ref{vtete1}), we obtain a family of solutions  to (T1)--(T2) for
$Q=2$, $n=r+1$, $r \in \mathbb N$.
Setting $|z|=1$ in (\ref{vunq2}), we obtain a family of solutions  to (T1)--(T2) for
$Q=2$, $r=n^2/4$, $n/2 \in \mathbb N$.  
Setting $|z_1|^2 + |z_3|^2 = |z_2|^2 + |z_4|^2 =1/2$ in (\ref{soln4r4}), 
we obtain a family of solutions  to (T1)--(T2) for $Q=2$, $r=n=4$.
\end{ex}
 
Let us remark that Example~\ref{ExQ2} provides examples of explicit  solutions to (T1)--(T2) 
for $Q=2$ for all the cases allowed by Proposition~\ref{SolQ2} when $r \leq 5$. 
Indeed, by Proposition~\ref{SolQ2b}, we have $n=r+1$ if $r=1,2,3,5$.
For $r=4$, the divisors of $r$ are $m=1,2,4$ and thus we have either $n=4$ or $n=5$.

\subsection{On the lower bound for $Q$}\label{LBQ}

By Theorem 3 of \cite{By1}, if $T \in M_{n^2}$ is a solution to (T1)--(T2) of rank $r$,
then  we have an estimate $Q \geq n/r$ for the corresponding value of $Q$ in~(T1). 
For $r=1$, this estimate is sharp (cf.\ Example~\ref{stansol}) 
but, for $r>1$, it probably can be improved.
In this context, it is worth to remark that for all the solutions to (T1)--(T2)
mentioned in this article we have
\begin{equation}\label{Qrn2}
 Q \geq \frac{2n}{r+1} .
\end{equation} 
Indeed, for the cases a), b), c) in Theorem \ref{ListSmr}, we have 
$Q_{r,n} = (2n + m \sqrt{r} (\sqrt{r} -1)^2)/(r+1)$. Note that this
formula applies also to the Example~\ref{stansol}, where $r=1$ and $Q \geq n$.
For the case d) in Theorem \ref{ListSmr}, we have 
$Q_{4,n} =  (2n + 2m -r-1)/(r+1)$, where $2m > r+1$. 
For the Example~\ref{ExQsq2}, 
Example~\ref{ExQsq3}, and solutions constructed in \cite{Kul3,WX},
we have $Q=\sqrt{n}$, $r=n \geq 1$. 
In this case, (\ref{Qrn2}) holds because $2 \sqrt{n} \leq n+1$ if $n \geq 1$.
Also, inequality (\ref{Qrn2}) holds obviously
for the Example~\ref{trivsol}, Example~\ref{Q2dirprod},  solution (\ref{soln4r4}),
and the rank two solution with $Q \geq n/\sqrt{2}$ constructed in \cite{By2}.
For any $Q=2$ solution allowed by Proposition~\ref{SolQ2}, 
inequality (\ref{Qrn2}) holds because, by Proposition~\ref{SolQ2b}, we have 
$n= m + r/m \leq r+1$. Finally, 
for the Example~\ref{stansol}, solution (\ref{vtete1}), and the rank two 
solution with $Q \geq 2n/3$ constructed in \cite{By2}, the estimate  
(\ref{Qrn2}) is sharp. 
At present, the author is not aware of any solution to (T1)--(T2) for which
(\ref{Qrn2}) does not hold. Thus, it is natural to conjecture
that the system (T1)--(T2) has no solution if $Q\, (r+1) < 2n$.

\section*{Appendix}

\begin{lemma}\label{TPrel} 
If $\rho_k$  are finite  for all $k < N$, then  the following relations hold
\begin{eqnarray}\label{tpt1}
& \ST_N \, \SP_{N} \, \ST_{N}  = 
 \frac{1}{\rho_N} \ST_N \, \SP_{N-1} ,&\\[0.5mm]
\label{tpt2}
& \ST_{N+1} \, \SP'_{N} \, \ST_{N+1}  
= \frac{1}{\rho_N} \ST_{N+1} \, \SP'_{N-1} ,  \quad
\ST_{1} \, \SP'_{N} \, \ST_{1}  
= \frac{1}{\rho_N} \ST_{1} \, \SP''_{N-1} , &\\[0.5mm]
\label{tpt3}
& \ST_{1} \, \SP'_{N} \, \ST_{N+1}\, \SP'_{N} \, \ST_{1} 
 = - \frac{1}{\rho_N \rho_{N+1}}  \ST_{1} \, \SP''_{N} +
 \frac{1}{\rho^2_N}  \ST_{1} \, \SP''_{N-1} , & \\[0.5mm]
 \label{tpt4}
& \ST_{N+1} \, \SP'_{N} \, \ST_{1}\, \SP'_{N} \, \ST_{N+1} 
 = - \frac{1}{\rho_N \rho_{N+1}}  \ST_{N+1} \, \SP_{N} +
 \frac{1}{\rho^2_N}  \ST_{N+1} \, \SP'_{N-1} .& 
\end{eqnarray}
\end{lemma}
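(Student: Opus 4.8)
The plan is to treat all four identities as consequences of three elementary facts about the Jones--Wenzl projector: the defining recursion (\ref{Pind}), the scalar recursion $Q-\rho_{N-1}=1/\rho_N$ coming from (\ref{rhoind}), and the two symmetries of $\SP_N$ already recorded in the text, namely the shift $\sigma(\ST_k)=\ST_{k+1}$ (which sends $\SP_M\mapsto\SP'_M$ and $\SP'_M\mapsto\SP''_M$) and the flip $\phi(\ST_k)=\ST_{M-k}$ on $TL_M$ (under which $\SP_M$ is invariant, and which sends $\SP_N\mapsto\SP'_N$ because it carries the generator set $\{\ST_1,\dots,\ST_{N-1}\}$ of $\SP_N$ onto $\{\ST_2,\dots,\ST_N\}$). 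Applying $\sigma$ and $\phi$ to (\ref{Pind}) yields, besides the usual ``right'' recursion, a ``left'' recursion that peels a strand off the low-index end, e.g. $\SP_{N+1}=\SP'_N-\rho_N\,\SP'_N\ST_1\SP'_N$ and $\SP'_N=\SP''_{N-1}-\rho_{N-1}\,\SP''_{N-1}\ST_2\SP''_{N-1}$. These left/right recursions, the Temperley--Lieb relations, and the fact that the generators occurring in $\SP_{N-1}$, $\SP''_{N-1}$ commute with the distant generators $\ST_N$, $\ST_1$ are the only inputs.

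First I would prove (\ref{tpt1}). Substituting $\SP_N=\SP_{N-1}-\rho_{N-1}\SP_{N-1}\ST_{N-1}\SP_{N-1}$ into $\ST_N\SP_N\ST_N$ and using that every generator of $\SP_{N-1}$ commutes with $\ST_N$, the first term collapses via $\ST_N^2=Q\ST_N$ to $Q\,\ST_N\SP_{N-1}$, while the second collapses via $\ST_N\ST_{N-1}\ST_N=\ST_N$ and idempotency to $\rho_{N-1}\,\ST_N\SP_{N-1}$; hence $\ST_N\SP_N\ST_N=(Q-\rho_{N-1})\ST_N\SP_{N-1}=\rho_N^{-1}\ST_N\SP_{N-1}$. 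The first identity in (\ref{tpt2}) is then the image of (\ref{tpt1}) under $\sigma$. The second identity in (\ref{tpt2}) I would get by the same computation run on the \emph{left} end: feeding the left recursion for $\SP'_N$ into $\ST_1\SP'_N\ST_1$ and using $\ST_1\ST_2\ST_1=\ST_1$ gives $\ST_1\SP'_N\ST_1=(Q-\rho_{N-1})\ST_1\SP''_{N-1}=\rho_N^{-1}\ST_1\SP''_{N-1}$.

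The two longer identities then follow by ``opening up'' the inner sandwich $\SP'_N\,\ST\,\SP'_N$ with the appropriate recursion and reducing to what is already proven. For (\ref{tpt3}), the right recursion for the projector on strands $2,\dots,N+2$ gives $\SP'_N\ST_{N+1}\SP'_N=\rho_N^{-1}(\SP'_N-\SP'_{N+1})$; sandwiching between $\ST_1$'s and applying the second identity of (\ref{tpt2}) at levels $N$ and $N+1$ yields exactly $\rho_N^{-2}\ST_1\SP''_{N-1}-(\rho_N\rho_{N+1})^{-1}\ST_1\SP''_N$. For (\ref{tpt4}), the left recursion $\SP_{N+1}=\SP'_N-\rho_N\SP'_N\ST_1\SP'_N$ gives $\SP'_N\ST_1\SP'_N=\rho_N^{-1}(\SP'_N-\SP_{N+1})$; sandwiching between $\ST_{N+1}$'s and applying the first identity of (\ref{tpt2}) together with (\ref{tpt1}) at level $N+1$ produces the stated combination.

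The only genuinely non-routine point, and the step I would treat most carefully, is the justification of the left recursion, since it is not literally an instance of (\ref{Pind}). I would derive it by applying $\phi$ to the recursion, using that $\SP_{N+1}$ is fixed, $\ST_N\mapsto\ST_1$, and $\phi(\SP_N)=\SP'_N$ (the last identity following from the uniqueness of the projector characterized by idempotency and (\ref{JW2}) on strands $2,\dots,N+1$), and then composing with $\sigma$ to obtain the shifted version. Once this symmetry bookkeeping is pinned down, every remaining manipulation is a short reduction using the Temperley--Lieb relations, so I expect no further difficulty.
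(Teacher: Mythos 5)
Your proposal is correct and is essentially the paper's own proof: the same direct computation gives (\ref{tpt1}), the shift gives the first half of (\ref{tpt2}), the left recursion $\SP_{N+1}=\SP'_N-\rho_N\,\SP'_N\ST_1\SP'_N$ is justified exactly as in the paper (eq.~(\ref{Pind2}), via the flip automorphism together with the invariance of the Jones--Wenzl projector), and (\ref{tpt3})--(\ref{tpt4}) follow by opening the inner sandwich $\SP'_N\,\ST\,\SP'_N$ with the appropriate recursion. The only difference is organizational: where the paper obtains the second identity of (\ref{tpt2}) and relation (\ref{tpt4}) by applying the flip $\phi(\ST_k)=\ST_{N+2-k}$ of $TL_{N+2}(Q)$ to the already-proven mirror identities and expands (\ref{tpt3}) inline via (\ref{Pind2}), you run the mirror computations directly from the left recursion and reuse (\ref{tpt1})--(\ref{tpt2}) modularly at levels $N$ and $N+1$ --- the computational content is identical.
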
 

\begin{proof}[\bf Proof of Lemma~\ref{TPrel}]
If $\rho_k$  are finite  for all $k < N$, projectors $\SP_1,\ldots,\SP_N$ are well defined.
Relation (\ref{tpt1}) is well known and easily derived. 
Note that $\ST_{N}$ commutes with $\SP_{N-1}$. Hence
\begin{align*}
{} \ST_N \, \SP_{N} \, \ST_{N}  & \stackrel{(\ref{Pind})}{=} 
 \ST_N \, ( \SP_{N-1} - \rho_{N-1} \, 
 \SP_{N-1}  \ST_{N-1} \, \SP_{N-1} ) \, \ST_{N} \\
{}& \stackrel{(\ref{TL1q}),(\ref{TL2})}{=} 
 \ST_N \, \SP_{N-1} \, (Q - \rho_{N-1}) 
 \stackrel{(\ref{rhoind})}{=}  
   \frac{1}{\rho_N} \ST_N \, \SP_{N-1} .
\end{align*} 
The first relation in (\ref{tpt2}) is obtained from (\ref{tpt1})
by the shift (\ref{T'}). Taking into account the remark made after eq.~(\ref{JW2}),
we obtain the second relation in (\ref{tpt2}) from the first one by applying 
the automorphism
$ \phi(\ST_{k}) = \ST_{N+2-k}$ of $TL_{N+2}(Q)$.

Next, applying the automorphism
$\phi(\ST_{k})= \ST_{N+1-k}$ of $TL_{N+1}(Q)$
to (\ref{Pind}), we obtain another form of the inductive relation for $\SP_N$,
\begin{equation}\label{Pind2} 
 \SP_{N+1} = \SP'_N - \rho_N \, 
 \SP'_N  \ST_{1} \, \SP'_N .
\end{equation} 
Taking into account
that $\ST_1$ commutes with $\SP''_{N-1}$ and $\SP''_{N}$, we
verify (\ref{tpt3}):
\begin{align*}
{}& \ST_{1} \, \SP'_{N} \, \ST_{N+1}\, \SP'_{N} \, \ST_{1} 
 \stackrel{(\ref{Pind})}{=} \frac{1}{\rho_N}
 \ST_{1} \, ( \SP'_{N} - \SP'_{N+1}) \, \ST_{1} \\
{}&   \stackrel{(\ref{Pind2})}{=}
\frac{1}{\rho_N} \ST_{1} \, (  \SP''_{N-1} - \rho_{N-1} \, 
 \SP''_{N-1}  \ST_{2} \, \SP''_{N-1} - \SP''_{N}
  + \rho_{N} \, 
 \SP''_{N}  \ST_{2} \, \SP''_{N}) \, \ST_{1} \\
{}& \stackrel{(\ref{TL1q}),(\ref{TL2})}{=} 
 \frac{1}{\rho_N} \ST_{1} \, (  \SP''_{N-1} (Q- \rho_{N-1})  
 - \SP''_{N} (Q - \rho_{N} )  ) 
 \stackrel{(\ref{rhoind})}{=}   
 \frac{1}{\rho_N} \ST_{1} \, (  \frac{1}{\rho_N} \SP''_{N-1}  
 - \frac{1}{\rho_{N+1}} \SP''_{N}   )   .
\end{align*}
Relation (\ref{tpt4}) is obtained from (\ref{tpt3}) by the automorphism
$\phi(\ST_{k}) = \ST_{N+2-k}$ of $TL_{N+2}(Q)$.
\end{proof}

\begin{proof}[\bf Proof of Proposition~\ref{HHrho}]
First, we note that 
\begin{align}
{}  \SP_{N+1} - \SP'_{N+1}  
 & \stackrel{(\ref{Pind}),(\ref{Pind2})}{=}   
 \label{pmp1}
\rho_N \, \SP'_N \, (\ST_{N+1} - \ST_{1}) \, \SP'_N .
\end{align} 
Relations (\ref{JW2}) imply that 
\begin{align}\label{ppnil}
\SP'_{N-1} \, \SP'_N = \SP'_N \, \SP'_{N-1} = 
\SP''_{N-1} \, \SP'_N = \SP'_N \, \SP''_{N-1} =\SP'_N . 
\end{align} 
Therefore, 
\begin{align}
\nonumber
{}& ( \SP_{N+1} - \SP'_{N+1}  )^2  \stackrel{(\ref{pmp1})}{=} 
  \rho^2_N \, \SP'_N \, (\ST_{N+1} - \ST_{1}) \SP'_N 
  (\ST_{N+1} - \ST_{1}) \, \SP'_N  \\
\nonumber
 {}&  \stackrel{(\ref{tpt2})}{=}
 \rho^2_N \, \SP'_N \, (\frac{1}{\rho_N} \ST_{N+1} \, \SP'_{N-1} 
 + \frac{1}{\rho_N} \ST_{1} \, \SP''_{N-1} 
 - \ST_{1} \, \SP'_N \, \ST_{N+1} - \ST_{N+1} \, \SP'_N \, \ST_{1} ) \, \SP'_N \\
\label{pmp2} 
 {}&  \stackrel{(\ref{ppnil})}{=}
 \rho^2_N \, \SP'_N \, (\frac{1}{\rho_N} \ST_{N+1}  
 + \frac{1}{\rho_N} \ST_{1} 
 - \ST_{1} \, \SP'_N \, \ST_{N+1} - \ST_{N+1} \, \SP'_N \, \ST_{1} ) \, \SP'_N .
\end{align} 
Combining (\ref{pmp1}) with (\ref{pmp2}), we derive relation (\ref{pp'}): 
\begin{align*}
{}& ( \SP_{N+1} - \SP'_{N+1}  )^3  = \\
{}& \stackrel{(\ref{pmp1}),(\ref{pmp2})}{=}
 \rho^3_N \, \SP'_N \, (\frac{1}{\rho_N} \ST_{N+1}  
 + \frac{1}{\rho_N} \ST_{1} 
 - \ST_{1} \, \SP'_N \, \ST_{N+1} - \ST_{N+1} \, \SP'_N \, \ST_{1} ) \, \SP'_N
 \, (\ST_{N+1} - \ST_{1}) \, \SP'_N \\
{}& \stackrel{(\ref{tpt2}),(\ref{tpt3}),(\ref{tpt4})}{=} 
 \rho^3_N \, \SP'_N \, (\frac{1}{\rho^2_N} \ST_{N+1} \, \SP'_{N-1} 
 + \frac{1}{\rho_N} \ST_{1} \, \SP'_N \, \ST_{N+1} 
 - \frac{1}{\rho_N} \ST_{1} \, \SP'_N \, \ST_{N+1} \, \SP'_{N-1} \\
{}& \qquad  +  \frac{1}{\rho_N \rho_{N+1}}  \ST_{N+1} \, \SP_{N} -
 \frac{1}{\rho^2_N}  \ST_{N+1} \, \SP'_{N-1}  
 - \frac{1}{\rho_N} \ST_{N+1} \, \SP'_N \, \ST_{1} 
 - \frac{1}{\rho^2_N} \ST_{1} \, \SP''_{N-1} \\
{}& \qquad  - \frac{1}{\rho_N \rho_{N+1}}  \ST_{1} \, \SP''_{N} +
 \frac{1}{\rho^2_N}  \ST_{1} \, \SP''_{N-1} 
 + \frac{1}{\rho_N} \ST_{N+1} \, \SP'_N \, \ST_{1} \, \SP''_{N-1} ) \, \SP'_N    \\
%
{}&  \stackrel{(\ref{ppnil})}{=}
\rho^2_N \, \SP'_N \, (\frac{1}{\rho_N} \ST_{N+1}  
 +  \ST_{1} \, \SP'_N \, \ST_{N+1} 
 -  \ST_{1} \, \SP'_N \, \ST_{N+1}   
 +  \frac{1}{\rho_{N+1}}  \ST_{N+1} \, \SP_{N} -
 \frac{1}{\rho_N}  \ST_{N+1}    \\
{}& \qquad 
 - \ST_{N+1} \, \SP'_N \, \ST_{1} 
 - \frac{1}{\rho_N} \ST_{1}  - \frac{1}{\rho_{N+1}}  \ST_{1} \, \SP''_{N} +
 \frac{1}{\rho_N}  \ST_{1} 
 +  \ST_{N+1} \, \SP'_N \, \ST_{1}  ) \, \SP'_N     \\
%
{}& = \frac{\rho^2_N}{ \rho_{N+1}} 
\SP'_N \, ( \ST_{N+1} \, \SP_{N} -   \SP''_{N} \, \ST_{1}  ) \, \SP'_N  
\end{align*}\begin{align*}
{}&  \stackrel{(\ref{Pind}),(\ref{Pind2})}{=} 
\frac{\rho^2_N}{ \rho_{N+1}} 
\SP'_N \, \bigl( \ST_{N+1} \, (\SP'_{N-1} - \rho_{N-1}
 \SP'_{N-1} \, \ST_1 \, \SP'_{N-1})   \\
{}& \qquad\quad   -
  (\SP''_{N-1} - \rho_{N-1} \SP''_{N-1} \, \ST_{N+1} \, \SP''_{N-1})   
  \, \ST_{1}  \bigr) \, \SP'_N \\
{}&  \stackrel{(\ref{ppnil})}{=}
\frac{\rho^2_N}{ \rho_{N+1}} 
\SP'_N \, ( \ST_{N+1} \,  - \rho_{N-1} \ST_{N+1} \,
 \SP'_{N-1} \, \ST_1 - \ST_1  + 
 \rho_{N-1}  \ST_{N+1} \, \SP''_{N-1} \, \ST_1 ) \, \SP'_N \\
{}&  = \frac{\rho^2_N}{ \rho_{N+1}} 
\SP'_N \, ( \ST_{N+1} \,  - \rho_{N-1} 
 \SP'_{N-1} \, \ST_{N+1} \, \ST_1 - \ST_1  + 
 \rho_{N-1}  \ST_{N+1}  \, \ST_1 \, \SP''_{N-1} ) \, \SP'_N \\
 {}&  \stackrel{(\ref{ppnil})}{=}
 \frac{\rho^2_N}{ \rho_{N+1}} 
\SP'_N \, ( \ST_{N+1} \,  - \ST_1   ) \, \SP'_N  
\stackrel{(\ref{pmp1})}{=} 
 \frac{\rho_N}{ \rho_{N+1}} ( \SP_{N+1} - \SP'_{N+1}  ) .  
\end{align*} 
\end{proof} 

\begin{proof}[\bf Proof of Theorem~\ref{QRan}]
Recall that the Chebyshev polynomials of the second kind are given by 
$U_m(t)=  2^m \prod\limits_{k=1}^{ m} 
   \bigl(t -  \cos\bigl(\frac{\pi k}{m+1}\bigr) \bigr)$. 
They satisfy the   recurrence relation $U_{m+1}(t)= 2t \,U_m(t) - U_{m-1}(t)$.
For $Q \neq 2$,  the solution to (\ref{rhoind}) can be expressed as
\begin{equation}\label{rhonex} 
 \rho_N = \frac{U_{N-1}(Q/2)}{U_{N}(Q/2)} .
\end{equation}
 For $Q = 2$, we have $\rho_N = N/(N+1)$.

 Let $\tau$ be a unitary tensor space representation of  $TL_N(Q)$ defined according to (\ref{tau})
by a matrix $T \in M_{n^2}$ of rank $r$ that satisfies (T1)--(T2). 
Set $P_N = \tau(\SP_N)$. Then
\begin{align}
\nonumber
\tr (P_{N+1}) & \stackrel{(\ref{Pind})}{=} \tr (P_{N} \otimes I_n)
 - \rho_N \tr (T_{N,N+1} P_N) = n \tr (P_{N})
 - Q^{-1} \rho_N \tr (T_{N,N+1} P_N T_{N,N+1}) \\
\nonumber
{}& \stackrel{(\ref{tpt1})}{=} n \tr (P_{N}) - Q^{-1} \tr ( P_{N-1} T_{N,N+1})
= n \tr (P_{N}) - Q^{-1} \tr ( P_{N-1} \otimes T) \\[1mm]
\nonumber {}& =  n \tr (P_{N}) - r \tr ( P_{N-1}) .
\end{align}
Comparing this relation with the recurrence relation for the Chebyshev polynomials
and taking into account that $\tr(P_1) =n$, $\tr(P_2)=n^2-r$,
we infer that 
\begin{equation}\label{mtrnex} 
 \mathrm{tr}( P_{N})  =  r^{N/2} \, U_{N} \bigl( \frac{n}{2\sqrt{r}} \bigr) .
\end{equation}

Let us show that if $T \in M_{n^2}$  satisfies (T1)--(T2), then   
$Q \in J_\infty  \cup [2,\infty)$. Indeed, suppose that $Q<2$ but $Q \notin J_\infty$.
Then there is $N \geq 3$ such that
$\cos ( \frac{\pi}{N} ) < Q/2 < \cos  (\frac{\pi}{N+1} )$ 
(recall that $Q\geq 1$ as a consequence of relation (\ref{PP2a})). 
Since $ \cos (\frac{\pi}{N+1} )$ is the maximal root of $U_N(t)$, 
we have $U_m(Q/2) >0$ for all $m < N$ and so, by 
equation (\ref{rhonex}),  $\rho_k$  are finite and positive for all $k < N$. 
Therefore, $\SP_1,\ldots,\SP_N$ are well defined by (\ref{Pind}) and so we  
can consider relation (\ref{pp'}) in the representation~(\ref{tau}). 
Its immediate consequence  is the relation 
\begin{equation}\label{PPp4}
 (P_{N} - P_{N}' )^4 = 
  \frac{\rho_{N-1}}{\rho_{N}} \,  (P_{N} - P_{N}' )^2 .
\end{equation}
Note that if  $Q/2$ is in the range
$\cos ( \frac{\pi}{N} ) < Q/2 < \cos  (\frac{\pi}{N+1} )$, $N \geq 3$, 
then it lies between the maximal and the next to maximal root of $U_N(t)$
(this fact was used in \cite{Jo1} to show that the restriction $Q \in J_\infty  \cup [2,\infty)$
follows from the requirement $\mathrm{Tr}( \SP_{N}) \geq 0$). 
So, $U_N(Q/2) <0$ and thus, by (\ref{rhonex}), we have $\rho_N <0$.
Note that the r.h.s. of (\ref{PPp4}) is not zero.
Indeed, if it were, then
relation (\ref{pp'}) and the fact that $\rho_{N-1}/\rho_{N} \neq 0$ would 
imply that $P_{N} =P_{N}'$. But then we have $T \otimes P_N = 
T_{12}  P_N'' = T_{12}  P_N =0$. Hence $P_N=0$ and, 
taking into account that $\rho_k$  are finite for all $k \leq  N$ and 
using (\ref{tpt1}), we infer by induction that $P_k=0$ for all $k \leq  N$,
 which is impossible. Thus, we conclude that the l.h.s. of (\ref{PPp4}) is a 
nonzero positive semidefinite matrix and the r.h.s. of (\ref{PPp4}) is a nonzero 
negative semidefinite matrix (since $\rho_{N-1}/\rho_{N} < 0$).
This contradiction implies that 
either $Q \in J_\infty$ or  $Q \geq 2$.

First, consider the case $Q \geq 2$. 
Then $\rho_N >0$ for all $N$ 
and $\SP_N$ is well defined by (\ref{Pind}) for any $N$.  
Since $P_N$ is positive semidefinite, we have $\tr (P_N) \geq 0$ for all $N$.
Suppose that $n^2 < 4 r \leq 4 n^2$. Then there is $N \geq 3$ such that  
$\cos ( \frac{\pi}{N} ) \leq \frac{n}{2\sqrt{r}} < \cos  (\frac{\pi}{N+1} )$,
that is $\frac{n}{2\sqrt{r}}$ lies between the maximal and the next to maximal 
root of $U_N(t)$. Whence, by (\ref{mtrnex}), we have $\tr (P_N) <0$.
This contradiction implies that  
$Q \geq 2$ is possible only if $r \leq  n^2/4$ (and we will see below that 
$Q< 2$ is not possible if $r\leq n^2/4$). 
This covers the class d) of solutions in Theorem~\ref{QRan}.
For this class, Theorem~3 in~\cite{By1} imposes an additional restriction, $Q \geq n/r$.

Now, consider the case  $Q \in J_\infty$, that is  
$Q =2\cos ( \frac{\pi}{N+1} )$ for some $N \geq 2$.
In this case, $\rho_k$  are finite and positive for $k < N$ but $U_N(Q/2)=0$
and thus $\rho_{N} = \infty$. However, $\SP_1,\ldots,\SP_{N}$ are still well defined.
Observe that 
$\ST_N  \SP_{N} \ST_{N}  = 0$ (which is derived in the same way as 
relation (\ref{tpt1}) by taking into account that $\rho_N = \infty$ iff  
$\rho_{N-1} =Q$). Therefore, in the representation $\tau$, we have
$ (P_{N} T_{N,N+1}  )^* P_{N} T_{N,N+1} =T_{N,N+1}  P_{N} T_{N,N+1} = 0 $.
Which implies that $P_N T_{N,N+1}  =0$. Multiplying this relation from both
sides by $T_{N+1,N+2}$ (which commutes with $P_N$), 
we infer that $P_N T_{N+1,N+2} = P_N \otimes T = 0$. Thus, $P_N=0$.
Whence by (\ref{mtrnex}) we have $U_N\bigl(\frac{\sqrt{s}}{2 }\bigr)=0$, where
$s = n^2/r$. Therefore, 
$s  = 4 (\cos \frac{\pi l}{N+1})^2 = 
e^{\frac{2\pi i l}{N+1}} + 2  +e^{-\frac{2\pi i l}{N+1}}$ for some~$l \in [1,N]$.
Since $s$ is a sum of three algebraic integers, it is an algebraic integer.  
But any rational algebraic integer 
is an ordinary integer (cf.\ Theorem 206 in~\cite{HW}).
So, we conclude that $s \in \{ 1,2,3 \}$.
A direct inspection shows that, for $s=1,2,3$, the only value of $N$ such that
$U_N(\frac{\sqrt{s}}{2})=0$ and $U_m(\frac{\sqrt{s}}{2}) \geq 0$ for all $m < N$
($\SP_m$ are well defined for $m < N$, so we must have $\tr(P_m) \geq 0$)  is,
respectively, $N=2,3,5$. The corresponding values of $Q =2\cos ( \frac{\pi}{N+1} )$
are, respectively, $Q=1,\sqrt{2},\sqrt{3}$.
This covers the classes a), b), and c) of solutions in Theorem~\ref{QRan}. 
\end{proof} 

\begin{proof}[\bf Proof of Proposition~\ref{ColQR}]
a) The only values of $r$ and $n$ which satisfy the inequalities
$\sqrt{r} \leq n < r \leq 5$ are $r=3, n=2$; $r=4, n=2$; $r=4, n=3$;
$r=5, n=3$; and $r=5, n=4$. In all of these cases, we have $r >n^2/4 $
which implies that none of them can correspond to a solution of the class d)
in Theorem~\ref{QRan}. For the classes a), b), and c), we must have 
$n^2/r=s$ with $s=1,2,3$, respectively. Which holds only for the pair $r=4, n=2$.  

b) It is easy to see that $R^{-1} = e^{-\gamma} I_{n^2} - T$. 
Since $T$ is Hermitian, $R$ is unitary only if  $\bar{\gamma} = -\gamma$.
Hence $Q=e^{\gamma}+e^{-\gamma} \leq 2$. It remains to
invoke Theorem~\ref{QRan} in the case $Q<2$ and Proposition~\ref{SolQ2}
in the case $Q=2$.
\end{proof} 

\begin{proof}[\bf Proof of Proposition~\ref{Tadd}] 
By Proposition~\ref{Wuni}, $Q_1 W_{\CT_1}$ and $Q_2 W_{\CT_2}$ are unitary
which is equivalent to the following equations on $V_k^{(i)}$, $i=1,2$
 ($V^t$ denotes the transpose of $V$)
\begin{align}\label{vvvvI}  
{}&       
  Q_i^2 \sum_{s=1}^r V_s^{(i)} \bar{V}_l^{(i)}
   (V_p^{(i)})^t (V_s^{(i)})^* = \delta_{lp} I_{n_i} .
\end{align} 
Note that 
$\tilde{V}_k = \frac{1}{ \sqrt{Q_1+Q_2}} 
( \sqrt{Q_1} V^{(1)}_k) \oplus ( \sqrt{Q_2} V^{(2)}_k)$ satisfy~(\ref{vv})
and, since they are block diagonal, they satisfy (\ref{vvvvI}) with $Q_i$ replaced 
by $(Q_1+Q_2)$ and $n_i$ is  replaced by $n_1+n_2$. 
Therefore, $(Q_1+Q_2) W_{\tilde{\CT}}$ is 
unitary and so, again by Proposition~\ref{Wuni},
$\tilde{T}$ is a solution to (T1)--(T2).
\end{proof}

\begin{proof}[\bf Proof of Proposition~\ref{VxU}]
For $V^{(n)}$ given by (\ref{vnq}),  matrix $Q_n(z) V^{(n)} \bar{V}^{(n)}$ is unitary.
Therefore, it suffices to note that 
$W_{\tilde{\CT}}=W_{\CT} \otimes (V^{(n)} \bar{V}^{(n)})$
and use Proposition~\ref{Wuni}.  
\end{proof}

\begin{proof}[\bf Proof of Proposition~\ref{Fus}] 
For $\tilde{T}_{1234}=T_{23} T_{12} T_{34} T_{23}$, the first 
relation in (T1) is obvious since $T_{12}$ and $T_{34}$ commute.
The second relation in (T1) and the first relation in (T2) are verified directly:
\begin{align*} 
{}& \tilde{T}_{1234}^2 =  
Q\, T_{23} T_{12} T_{34} T_{23}   T_{12} T_{34} T_{23} 
= Q\, T_{23}   T_{34}   T_{12} T_{34} T_{23}  
= Q^2  T_{23}    T_{12} T_{34} T_{23} = Q^2 \tilde{T}_{1234} , \\
{}& \tilde{T}_{1234}  \tilde{T}_{3456}  
 \tilde{T}_{1234}   = T_{23} T_{12} T_{34} T_{23}  
 T_{45} T_{34} T_{56} T_{45}   
  T_{23} T_{12} T_{34} T_{23}  \\
{}&  \qquad\qquad\qquad\ \,  
=  T_{23} T_{12} T_{34}  T_{45} T_{23}   T_{56} T_{45}   
    T_{12} T_{34} T_{23} = T_{23} T_{12} T_{34} 
  T_{23}  T_{45}  T_{12} T_{34} T_{23} \\
{}& \qquad\qquad\qquad\ \,  
= T_{23}  T_{34}  T_{12}  T_{45}   T_{34} T_{23} 
 = T_{23}  T_{12}  T_{34} T_{23}  
= \tilde{T}_{1234} .
\end{align*}
The second relation in (T2) is checked analogously. 
Finally, we have 
$Q^2 \tilde{r} =\tilde{Q} \tilde{r} = \tr \tilde{T}_{1234} = Q \tr (T_{12} T_{34} T_{23})
 = \tr (T_{12} T_{34} T_{23} T_{12}) = \tr ( T_{34}  T_{12}) = 
 \tr (T \otimes T) = Q^2 r^2$. Whence $\tilde{r} = r^2$.
\end{proof} 

\begin{proof}[\bf Proof of Proposition~\ref{Solnr4}]
We have $V_k = D_k P_{\sigma_k}$, where $D_k$ is a diagonal
matrix and $P_{\sigma_k}$ is the permutation matrix corresponding to an
element $\sigma_k \in S_4$. 
Set $\Lambda_{abcd} = \mathrm{diag}(|z_a|^2, |z_b|^2,|z_c|^2,|z_d|^2)$.
Note that 
$V_1 V_1^* = V_2 V_2^* = \Lambda_{1234}$,
$V_3 V_3^* = V_4 V_4^* = \Lambda_{3412}$, and 
$V_k V_p^*$ are traceless matrices if $k\neq p$. 
Therefore, taking the first relation in (\ref{zz4r4}) into account, we see that
 $V_k$ satisfy relations (\ref{vv}). Using that
$V_k \Lambda_{abcd} = \Lambda_{\sigma_k(abcd)} V_k$,
one can check that $ \sum_{s=1}^4 V_s \bar{V}_k
   V_k^t  V_s^* =  (\Lambda_{2341} + \Lambda_{4123}) \Lambda_{1234}
   +  (\Lambda_{2143} + \Lambda_{4321}) \Lambda_{3412} = Q_{z_1,z_2,z_3,z_4}^{-2} I_{4} $.
In order to verify that $Q_{z_1,z_2,z_3,z_4} W_{\CT}$ is unitary it remains
to check that $\sum_{s=1}^4 V_s \bar{V}_k  V_p^t  V_s^* = 0$ if $k \neq p$
(cf.\ equation (\ref{vvvvI})) which can be done by a direct computation.
Thus, the claim follows by invoking Proposition~\ref{Wuni}.
Setting $\zeta=\sqrt{(|z_1|^2 + |z_3|^2)/(|z_2|^2 + |z_4|^2)}$ and 
taking the first relation in (\ref{zz4r4}) into account, we infer that
$Q_{z_1,z_2,z_3,z_4} = \zeta + 1/\zeta \geq 2$.
\end{proof}

\begin{proof}[\bf Proof of Proposition~\ref{Snrmo}]
Set $\gamma =  1/(|z_1|^2 + |z_2|^2)$.
Note that $V_k V_p^* = \gamma \delta_{kp} |z_1|^2 E^{(n)}_{11} + 
\gamma |z_2|^2 E^{(n)}_{k+1,p+1}$. Thus, $V_k$ satisfy relations (\ref{vv}). 
Substituting (\ref{vtete1}) in (\ref{WV}), we obtain
\begin{align}\nonumber 
{}&       
 W_{\CT} = W_1 + W_2 , \qquad
 W_1 =\gamma z_1 \bar{z}_2 \, I_r \otimes E^{(n)}_{11} , \quad 
 W_2 = \gamma z_2 \bar{z}_1 \sum_{s,m=1}^r E^{(r)}_{sm} \otimes E^{(n)}_{m+1,s+1} .
\end{align}  
Note that $W_1^t =  W_1$,  $W_2^t =  W_2$, and 
$W_1 W_2 = W_2 W_1 =0$. Therefore,
\begin{align}\nonumber 
 ( \gamma |z_1| |z_2|)^{-2} W_{\CT} W_{\CT}^*  & = 
 I_r \otimes E^{(n)}_{11} + 
  \sum_{m=1}^r I_r \otimes E^{(n)}_{m+1,m+1} = I_r \otimes I_n .
\end{align} 
Thus, $\gamma |z_1| |z_2| \,W_{\CT}$  is unitary
and the claim follows by invoking Proposition~\ref{Wuni}.
\end{proof}

\begin{proof}[\bf Proof of Theorem~\ref{ListSmr}]
We will use Proposition~\ref{Tadd} in order to construct ``direct sums" of
solutions to (T1)--(T2). 
Below, $T_{(r)}$ will denote solution (\ref{vtete1}) for a given~$r$,
$\tilde{T}^{(n)}$ will denote solution  (\ref{vnq}) for a given~$n$, 
 and $m$ will be an integer in the range 
$[0,\ldots,r]$.  We will refer to $n$ as the ``size" of a solution (although $T \in M_{n^2}$). 
Recall that, in this proof, $n>r$.

a) Let $n \neq 4$. Taking the sum of $m$ copies of the solution (\ref{qrt2})
and $(k-m)$ copies of the solution~$T_{(2)}$ we obtain, by Proposition~\ref{Tadd},  
a solution to (T1)--(T2) of size $n=3k-m$ and rank $r=2$ for any $Q \geq \sqrt{2} m + 2(k-m)$.
For $n=4$, we have $k=m=2$ and so this construction yields a 
solution only for $Q=2 \sqrt{2}$. However, taking the product, in the
sense of Proposition~\ref{VxU}, of the solution (\ref{qrt2}) with the solution 
$\tilde{T}^{(2)}$, we obtain a solution for $n=4$, $r=2$ and any $Q \geq 2 \sqrt{2}$
(another solution for  $n=4$, $r=2$ was given in \cite{By2}, Proposition~7). 

b) Let $n \neq 5, 6,9$. Taking the sum of $m$ copies of the solution (\ref{qrt3})
and $(k-m)$ copies of the solution~$T_{(3)}$, we obtain 
a solution to (T1)--(T2) of size $n=4k-m$ and rank $r=3$ for any $Q \geq \sqrt{3} m + 2(k-m)$.
Taking the product of the solution (\ref{qrt3}) with the solution 
$\tilde{T}^{(2)}$ or $\tilde{T}^{(3)}$, we obtain, respectively, 
a solution for $n=6$, $r=3$ and any $Q \geq 2 \sqrt{3}$ or  
$n=9$, $r=3$ and any $Q \geq 3 \sqrt{3}$.
For $n=5$, a solution cannot be given by a ``direct sum" because, by
Proposition~\ref{ColQR}, there exists no solution to (T1)--(T2) of rank 3 for $n=1,2$.

c)-d) Recall that (\ref{soln4r4}) 
provides a solution to (T1)--(T2) for $n=r=4$ and any $Q \geq 2$.
We will denote this solution by $T'$. 

Taking the sum of $m$ copies of the solution $T'$
and $(k-m)$ copies of the solution~$T_{(4)}$, we obtain 
a solution to (T1)--(T2) of size $n=5k-m$ and rank $r=4$ for any $Q \geq 2 m + 2(k-m) =2k$.
However, for $m=3,4$, we can obtain a solution with $Q \geq 2k -1$.
Indeed,  taking the sum of one copy of the trivial solution $T=I_4$, 
and $(k-1)$ copies of the solution~$T_{(4)}$, we obtain 
a solution to (T1)--(T2) of size $n=5k-3$ and rank $r=4$ for any $Q \geq 1 + 2(k-1) =2k-1$.
Also, taking the sum of one copy of the solution $T'$, one copy of the trivial
solution $T=I_4$, and $(k-2)$ copies of the solution~$T_{(4)}$,  we obtain
 a solution to (T1)--(T2) of size $n=5k-4$ and rank $r=4$ for any $Q \geq 2+ 1 + 2(k-2) =2k-1$.
\end{proof}  

\begin{proof}[\bf Proof of Proposition~\ref{SolQ2b}]
i) If $m$ divides $r$ and $n=m+r/m$, then $n^2 - 4r=(m-r/m)^2$,
so that the condition of Proposition~\ref{SolQ2} is fulfilled.
On the other hand, if the condition of Proposition~\ref{SolQ2} is fulfilled,
then $m=\frac{1}{2}(n + \sqrt{n^2 -4r})$ is an integer (note that $n$ and 
$\sqrt{n^2 -4r}$ have the same parity) and we have $r=m(n-m)$. 
Thus, $m$ divides $r$ and $n=m+r/m$. \\
ii) By i), we have $n=m+r/m$, where $m=1$ or $m=r$. Hence $n=r+1$. 
\end{proof}

 \vspace*{1mm}
\small{
{\bf Acknowledgements.} 
This work was supported by the project MODFLAT of the 
European Research Council and the NCCR SwissMAP of the Swiss National Science Foundation,
 and in part by the Russian Fund for Basic Research Grant No. 18--01--00271.
}

 \vspace*{1.5mm}
{\sc \small
\noindent
Section of Mathematics, University of Geneva,  
C.P. 64, 1211 Gen\`eve 4, Switzerland   \\[1mm]
Steklov Mathematical Institute,
Russian Academy of Sciences,
Fontanka 27, 191023, St. Petersburg, Russia}

\end{document}